 \theoremstyle{plain}
\newtheorem{prop}{Proposition}
\theoremstyle{definition}
\theoremstyle{remark}
\newtheorem{remark}{Remark}
\newtheorem{assumption}{Assumption}
\newtheorem{cor}{Corollary}
\def\R{{\mathbb{R}}}
\def\C{{\mathbb{C}}}
\newcommand{\E}[1]{\mathsf{E}\hspace{-0.8mm}\left[#1 \right]}
\newcommand{\Blkdiag}[1]{\mathsf{Blkdiag}\left\{#1 \right\}}
\newcommand{\Diag}[1]{\mathrm{diag}\left\{#1 \right\}}
\newcommand{\vect}[1]{\mathrm{vec}\left\{#1 \right\}}
\newcommand{\tr}[1]{\mathsf{tr}\hspace{-0.8mm}\left(#1 \right)}
\def\T{\scriptscriptstyle{\mathsf{T}}}
\def\H{\scriptscriptstyle{\mathsf{H}}}
\def\M{\scriptscriptstyle{\mathrm{M}}}
\def\argmin{\mathop{\mathrm{arg\,min}}}
\def\bc{{\mathbf{c}}}
\def\bx{{\mathbf{x}}}
\def\bz{{\mathbf{z}}}
\def\bu{{\mathbf{u}}}
\def\bh{{\mathbf{h}}}
\def\bg{{\mathbf{g}}}
\def\bq{{\mathbf{q}}}
\def\b1{{\mathds{1}}}
\def\bw{{\mathbf{w}}}
\def\bsig{{\pmb{\sigma}}}
\def\bga{{\boldsymbol{{\gamma}}}}
\def\u{{\upsilon}}
\def\e{{\varepsilon}}
\def\s{{\sigma}}
\def\bH{{\mathbf{H}}}
\def\bG{{\mathbf{G}}}
\def\bQ{{\mathbf{Q}}}
\def\bI{{\mathbf{I}}}
\def\bP{{\mathbf{P}}}
\def\bK{{\mathbf{K}}}
\def\bS{{\mathbf{S}}}
\def\bT{{\mathbf{T}}}
\def\bC{{\mathbf{C}}}
\def\bD{{\mathbf{D}}}
\def\bmS{{\boldsymbol{\mathcal{S}}}}
\def\bmF{{\boldsymbol{\mathcal{F}}}}
\def\bmJ{{\boldsymbol{\mathcal{J}}}}
\def\bmI{{\boldsymbol{\mathcal{I}}}}
\def\bmT{{\boldsymbol{\mathcal{T}}}}
\def\bmU{{\boldsymbol{\mathcal{U}}}}
\def\bmSig{{\boldsymbol{{\Sigma}}}}
\def\bmGa{{\boldsymbol{{\Gamma}}}}
\begin{document}

\title{Variants of Partial Update Augmented CLMS Algorithm and Their Performance Analysis}

\author{Vahid Vahidpour, Amir Rastegarnia, Azam Khalili, Wael M. Bazzi, and Saeid Sanei, 

\thanks{Manuscript received 2018.
V. Vahidpour, A. Rastegarnia, and A. Khalili are with the Department of Electrical Engineering, Malayer University, Malayer 65719-95863, Iran (email: v.vahidpour@stu.malayeru.ac.ir; rastegarnia@malayeru.ac.ir; a-khalili@tabrizu.ac.ir).

W. M. Bazzi is with the Electrical and Computer Engineering Department, American University in Dubai
Dubai, United Arab Emirates, (email: wbazzi@aud.edu).

S. Sanei is with the School of Science and Technology, Nottingham Trent
University, Clifton Lane, Nottingham, U.K. (e-mail: saeid.sanei@ntu.ac.uk)
}
 \thanks{Digital Object Identifier 2018/XX}
}

\markboth{IEEE Transactions on Signal Processing}%
{Malekian \MakeLowercase{\textit{et al.}}: Bare Demo of IEEEtran.cls for Journals}

\IEEEpubid{0000--0000/00\$00.00~\copyright~2019 IEEE}


\maketitle

\begin{abstract}
Naturally complex-valued information or those presented in complex domain are effectively processed by an augmented
complex least-mean-square (ACLMS) algorithm. In some applications, the ACLMS algorithm may be too computationally- and  memory-intensive to implement. 
In this paper, a new algorithm, termed partial-update ACLMS (PU-ACLMS) algorithm is proposed, where only a 
fraction of the coefficient set is selected to update at each iteration. Doing so, two types of partial-update 
schemes are presented referred to as the sequential and stochastic partial-updates, to reduce computational 
load and power consumption in the corresponding adaptive filter. The computational cost for 
full-update PU-ACLMS and its partial-update implementations are discussed. Next, the steady-state mean and 
mean-square performance of PU-ACLMS for non-circular complex signals are analyzed and closed-form expressions of the 
steady-state excess mean-square error (EMSE) and mean-square deviation (MSD) are given. Then, employing the weighted energy-conservation relation, the EMSE and MSD learning curves are derived. The simulation results are 
verified and compared with those of theoretical predictions through numerical examples.  
\end{abstract}

\begin{IEEEkeywords}
Augmented CLMS, energy-conservation, non-circular, partial-update, sequential algorithm, stochastic-algorithm, 
widely linear model.
\end{IEEEkeywords}

\IEEEpeerreviewmaketitle

\section{Introduction}
\label{sec:1}

\IEEEPARstart{C}{omplex}-valued adaptive filters are exploited in many practical signal processing applications such as  
  channel estimation \cite{Sayed2008,jahanchahi2014complex}, frequency estimation \cite{xia2011widely,xia2017widely,xia2012adaptive}, and self-interference cancellation \cite{li2018augmented,li2019cost}. The standard complex least-mean-square (CLMS), as the generic extension of  LMS algorithm in the complex domain $\C$ \cite{xia2018performance,mandic2010steady}, is one of the widely-used adaptive signal processing algorithms, because of its 
simplicity and ease of implementation \cite{widrow1975complex,wu2019steady}. In some scenarios, often the source signals are 
non-circular or improper. Recent advances have put this assumptions under scrutiny \cite{mandic2009complex,adali2011complex}. Specifically, adaptive 
filtering techniques developed with non-circularity or impropriety in mind have been manifested to possess superior performance in an expanding number of applications \cite{mandic2009complex,adali2011complex,clark2010multiband,zarei2016q}.

The augmented complex statistics have provided the possibility to adequately use the complementary information 
of signal non-circularity \cite{adali2011complex,schreier2003second,picinbono1995widely,khalili2016quantized}. This has served as a basis for the evolution of the class of augmented 
adaptive filtering algorithms. These adaptive algorithms are usually known as widely linear algorithms, 
e.g., widely linear LMS (Wl-LMS) \cite{schober2004data}, ACLMS algorithms \cite{javidi2008augmented}, augmented affine projection algorithm (AAPA) \cite{xia2010augmented}, 
widely linear recursive least-squares (WL-RLS) \cite{douglas2009widely}, regularized normalized augmented complex LMS (RN-ACLMS) \cite{xia2010regularised}, 
and augmented extended Kalman filter (AEKF) algorithms \cite{goh2007augmented}.

Some adaptive filtering applications, like channel equalization, echo cancellation, and multi-user detection, 
require an adaptive filter with a very large coefficient vector. In such applications, the ACLMS 
algorithm may be too computationally- and memory-intensive to implement. In order to overcome the mentioned 
constraints, one might allow a subset of the adaptive filter coefficients to be updated at each iteration, 
rather than the entire coefficient vectors. Such a process is called partial coefficient update or briefly partial update \cite{Dogancay2008}. Availability of a finite number of hardware multipliers often driven by cost, space and power consumption considerations, is the main reason for partial updating. 

\IEEEpubidadjcol

To reduce the computational costs and power consumption, various types of partial update schemes, such as Periodic and Sequential LMS  algorithm \cite{douglas1997adaptive}, and stochastic partial updating \cite{godavarti2005partial} have been applied to LMS algorithm. In the Periodic LMS algorithm, all the filter coefficients are updated every $P$-th iteration. The Sequential LMS algorithm updates only a portion of coefficients
at each iteration. The stochastic partial-update LMS \cite{godavarti2005partial} is a randomized version of sequential LMS algorithm in that the coefficient 
subsets are chosen in a random instead of deterministic fashion. Another approach referred to as max 
partial-update LMS algorithm has been proposed in \cite{aboulnasr1997selective,douglas1996analysis,douglas1994family}. Diniz and Werner \cite{werner2004partial} proposed another variant known 
as set-member-ship partial-update NLMS algorithm based on data-selective updating. The performance analysis 
of time-domain adaptive filters in the under-modeling situation is established in \cite{xia2018performance}, where a deficient length 
ACLMS in $\C$ has been considered for second order improper signals. Some distributed versions of partial-update 
adaptive filters such as \cite{Arablouei2014,Arablouei2014a,vahidpour2018analysis,vahidpour2017analysis,vahidpour2019partial,vahidpour2019performance} have been developed in the literature.

In this paper,  a reduced complexity ACLMS algorithm employing partial updating for improper
complex signals is proposed. The algorithm, referred to as partial-update ACLMS (PU-ACLMS) algorithm involves selection of a  fraction of the coefficients at every iteration. To this end, we consider two types partial-update schemes namely sequential and stochastic partial-update. 

The main contributions of this paper are summarized as follows:
\begin{itemize}
	\item A new algorithm, called PU-ACLMS is proposed to control the computational complexity;
	\item The computational complexity for full-update ACLMS and its partial-update 
		implementations are examined thoroughly. For large filter lengths the PU-ACLMS algorithms are 
		able to lower the full-complexity by approximately a factor of two;
	\item In the absence of exact performance analysis the concept of energy-conservation is 
	utilized to derive approximate closed-form expressions for mean-square-error (MSE) and excess mean-square-error 
	(EMSE) of the proposed algorithm;
	\item The closed-form expressions enable us to find a monotonically increasing relationship between these 
	quantities and the step-size parameter $\mu$;
	\item The stability conditions for PU-ACLMS algorithm are derived both in mean and mean-square 
	senses for non-circular signals scenarios. Conditions on step-size $\mu$ are established to guarantee the mean and mean-square stability 	of PU-ACLMS algorithms;
	\item Employing the energy-conservation approach, closed-form expressions to 
	describe the EMSE and MSD learning curves are derived;
	\item The convergence rates of PU-ACLMS algorithms and the full-update ACLMS are investigated.
\end{itemize}

Throughout the paper, we adopt normal lowercase letters for scalars, bold lowercase letters for column 
vectors and bold uppercase letters for matrices, while $\bI$ denotes an identity matrix of appropriate 
size. The real and complex domains are denoted by $\R$ and $\C$. $\R^{\geq 0}$ denotes the set of positive 
real numbers. For ease of reference, a list of main symbols used throughout the text are collected in 
Table \ref{tbl:1}.
\begin{table}[]
\caption{Symbols and Their Descriptions}
	\centering
		\begin{tabular}{lcl}
		\hline
		Symbol & \hspace{0.2cm} & Description \\
				\hline \hline \\
				\vspace{-5.5mm}
				\\
		$(\cdot)^{\T}$	 & \hspace{0.2cm} & Matrix transposition \\
		$(\cdot)^*$	& \hspace{0.2cm} & Complex conjugate    \\
		$(\cdot)^{\H}$	& \hspace{0.2cm} & Hermitian transposition  \\
		$\|\cdot\|^2 $	& \hspace{0.2cm} & Squared Euclidean norm \\
		$\|\bx\|^2_{\bmSig} $	& \hspace{0.2cm} & Weighted norm, $\bx^{*}\bmSig \bx$    \\
		$\mathrm{tr} $	& \hspace{0.2cm} & Trace of a matrix    \\
		$\E{\cdot} $	& \hspace{0.2cm} & Statistical expectation  \\
		$\otimes $	& \hspace{0.2cm} & Kronecker product  \\
		\hline 
		\end{tabular} 
		\label{tbl:1}
\end{table}

The rest of the paper is organized as follows:
In Section \ref{sec:2}, the widely linear model and the ACLMS algorithm are briefly introduced. The PU-ACLMS algorithm and its computational complexity is provided in Section \ref{sec:3}. Section \ref{sec:4} investigates the different aspects of PU-ACLMS algorithm including its steady-state performance, stability conditions, transient performance and convergence analysis. Performance evaluations are illustrated in Section \ref{sec:5}. The paper is finally concluded in Section \ref{sec:6}.

\section{Augmented CLMS Algorithm}
\label{sec:2}
Consider the parameter estimation problem as depicted in Fig. \ref{fig:1}. Here, $d(n)\in \C$ denotes a second order non-circular desired signal  generated by a widely linear model as
\begin{equation}
\label{eq:1}
d(n) = \bu^{\T}(n)\bh^{o}+\bu^{\H}(n)\bg^{o}+\u(n)
\end{equation}
where $\bh^{o}$ and $\bg^{o}$ denote the optimal standard and conjugate 
weight vectors respectively. Moreover, $\u(n) \in\C$ is the measurement noise and 
$\bu(n)=[u(n),\ldots,u(n-N+1)]^{\T}\in\C^{N\times 1}$ is the input vector. 
\begin{figure} [t]
\centering \includegraphics [width=7.5cm]{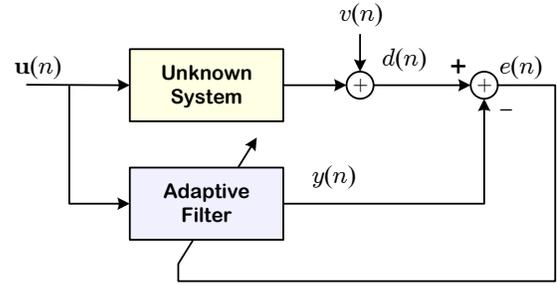} 
\centering \caption{Schematic diagram for a conventional adaptive filter parameter estimation.}
\label{fig:1}
\end{figure}
The following assumptions are considered for the data:
\begin{assumption}\label{asp:1}\
\begin{enumerate}[(i)]
\item The input vectors $\{\bu(n)\}$ and the additive noise $\{\u(n)\}$ are stationary and zero-mean.
\item The noise sequence $\{\u(n)\}$ is independent and identically distributed (i.i.d.) with 
variance $\sigma_{\u}^{2}=\E{|\u(n)|^2}$.
\item The noise sequence $\{\u(n)\}$ is statistically independent of $\bu(\ell)$ for 
all $n\neq \ell$.
\item The regressor covariance matrix is positive-definite $\bC_{\bu} = \E{\bu(n)\bu^{\H}(n)}>0$.
\end{enumerate}
\end{assumption}

The problem of estimating the model parameters can be formulated  as follows:
\begin{equation}
\label{eq:3}
\argmin_{\{\bh,\bg\}}J(\bh,\bg) = \E{|e(n)|^{2}}=\E{|d(n)-y(n)|^2}
\end{equation} 
where $\bh(n)$ and $\bg(n)$ are adjustable filter weight vectors, called standard and 
conjugate weight vectors, respectively. In addition, using the augmented statistics, the output $y(n)$  can be 
written as \cite{mandic2010steady}
\begin{equation}
\label{eq:2}
y(n)= \bu^{\T}(n)\bh(n)+\bu^{\H}(n)\bg(n)
\end{equation}
In order to solve \eqref{eq:3}, the ACLMS algorithm updates its weight vectors according to 
\begin{align}
\label{eq:4}
\bh(n+1)&=\bh(n)+\mu e(n)\bu^{*}(n)\\
\bg(n+1)&=\bg(n)+\mu e(n)\bu(n)
\label{eq:5}
\end{align}
A detailed study of this algorithm can be found in \cite{mandic2010steady}.

\section{Partial-Update ACLMS Algorithm}
\label{sec:3}
\subsection{Algorithm Derivation}
In the partial-update schemes, only $M$ out of $N$  weights are allowed to be updated at each iteration. This can be achieved by modifying the adaptation recursions in \eqref{eq:4} and \eqref{eq:5} as:
\begin{align}
\bh(n+1)&=\bh(n)+\mu e(n)\bmI_{\M}(n)\bu^{*}(n)  \label{eq:6a}\\
\bg(n+1)&=\bg(n)+\mu e(n)\bmI_{\M}(n)\bu(n)
\label{eq:6b}
\end{align}
where $\bmI_{\M}(n)$ is an $N \times N$ diagonal coefficient selection matrix defined as: 
\begin{equation*}
\bmI_{\M}(n)= \Diag{i_{1}(n),i_{2}(n),\cdots,i_{N}(n)}
\end{equation*}
The $\bmI_{\M}(n)$  entries satisfy the following constraints:
\begin{equation*}
	i_{k}(n)\in \{0,1\}, \ \ \sum_{k=1}^N i_{k}(n)=M
\end{equation*}
In the \emph{sequential} partial update $i_{k}(n)$ is given by
\begin{align}
\label{eq:8}
i_{k}(n)=\begin{cases}
1 & \text{if}\, k \in \bmS_{\text{mod}\left(n,\beta\right)+1} \\
0 & \text{otherwise}
\end{cases}
\end{align}
where $\beta = \left\lceil N/M\right\rceil$ and the operator, $\text{mod}(n,\beta)$, returns the reminder of the Euclidean division of $n$ by $\beta$. Briefly, at a given iteration $n$, on of the coefficient subsets $\bmS_{t}, \ t=\left\{1,\ldots, \beta\right\}$ is selected deterministically in a round-robin fashion \cite{Dogancay2008}, and the update is performed. 
\begin{remark}
The coefficient subsets $\bmS_{t}$ are not uniquely specified if they meet 
the following conditions \cite{Dogancay2008}:
\begin{enumerate}
	\item $\bigcup_{t=1}^{\beta}\bmS_{t}=\bS,\ \text{where} \ \bS=\left\{1,2,\ldots,N\right\}$;
	\item $\bmS_{t} \cap \bmS_{\ell}=\phi, \ \forall t,\ell \in \left\{1,\ldots,\beta\right\}\ \text{and}\ t\neq \ell$.
\end{enumerate}
\end{remark}

In the stochastic PU-ACLMS, at a given iteration $n$ one of the sets $\bmS_{t}, t=\{1,\ldots, \beta\}$, is selected in random form $\left\{\bmS_{1},\ldots,\bmS_{\beta}\right\}$ with equal probability, whereas for sequential PU-ACLMS one of the sets 
$\bmS_{t}$ is selected in a deterministic fashion.

\begin{remark}
Let's define the partial-augmented weight vector $\bw(n)$, the augmented system input vectors $\bz(n)$ and 
diagonal matrix $\bmJ_{\M}(n)$ as follows
\begin{align} \label{defs}
\bw(n)& \triangleq\left[\bh^{\T}(n),\bg^{\T}(n)\right]^{\T}\\
\bz(n)&\triangleq \left[\bu^{\H}(n),\bu^{\T}(n)\right]^{\T}\\
\bmJ_{\M}(n)&\triangleq \Blkdiag{\bmI_{\M}(n),\bmI_{\M}(n)}
\end{align}
Using the above definitions, the augmented partial-coefficient-update equations  \eqref{eq:6a} and \eqref{eq:6b} become:
\begin{align}
\label{eq:weq}
\bw(n+1)=\bw(n)+\mu e(n)\bmJ_{\M}(n)\bz(n)
\end{align}
\end{remark}

\begin{remark}
It should be noted that \eqref{eq:6a} and \eqref{eq:6b} can be rewritten in an equivalent form as
\begin{align}
\bh_{\M}(n+1)=\bh_{\M}(n)+\mu e(n)\bu_{\M}^{*}(n)  \label{eq:7a}\\
\bg_{\M}(n+1)=\bg_{\M}(n)+\mu e(n)\bu_{\M}(n)
\label{eq:7b}
\end{align}
where $\bh_{\M}(n)$ denotes $M \times 1$ sub-vector of $\bh(n)$  which is formed at time $n$ by stacking the elements of $\bh(n)$ with $i_{k}(n)=1$. The vectors $\bg_{\M}(n)$ and $\bu_{\M}(n)$ are respectively the $M \times 1$ sub-vector of $\bg(n)$ and $\bu(n)$ defined similarly to $\bh_{\M}(n)$.  It is noteworthy to mention that \eqref{eq:7a} and \eqref{eq:7b} represent a reduced-size adaptive filter in accordance with the partial update.
\end{remark}
The summary of PU-ACLMS algorithm can be seen in Algorithm 1. 

\begin{algorithm}[!t]\footnotesize{
  \caption{The pseudocodes of PU-ACLMS Algorithm.}
  \begin{algorithmic} 
	\State Set a small $\mu$ value and initialize $\bh_{0}$ and $\bg_{0}$ randomly. 
	Then perform the following steps for $n\geq 1$:
		       \State 1. Evaluate the output $y(n)$ via \eqref{eq:2};
         \State 2. Compute the error signal $e(n)=d(n)-y(n)$;
					\State 3. Select $\bmS_{t}$ (based on the sequential or stochastic rules).
			    \State 4. Update $\bh(n)$ via \eqref{eq:6a};
	        \State 5. Update $\bg(n)$ via \eqref{eq:6b};
        \State 6. If convergence is achieved stop, otherwise go to 1.
  \end{algorithmic}}
  \label{alg-1}
  \end{algorithm}

\subsection{Computational Complexity}
\label{sub:2C}
The total computational complexity of the ACLMS algorithm is $16N+2$ real multiplications and $16N$ real additions per iteration. The computational complexity for the \textit{sequential} PU-ACLMS comprises of the following steps at each iteration:
\begin{enumerate}
	\item $8N$ real multiplications and $8N-2$ real additions to compute the current output $y(n)$.
	\item Two real additions are required to evaluate the error signal, $e(n)$.
	\item Two real multiplications are needed to compute $\mu e(n)$.
	\item $8M$ real multiplications and $4M$ real additions are needed to compute $\mu e\left(i\right)\bu_{\M}^{*}\left(i\right)$ and $\mu e(n)\bu_{\M}e(n)$.
	\item Finally, the update of $\bh_{\M}$ and $\bg_{\M}$  requires $4M$ real additions in total.
	\end{enumerate}
\begin{remark}
Clearly, partial update algorithms are likely to reduce the computational cost. Their implementation require some additional computations for selecting the sub-set of weights to  be updated. To address this issue, some available methods such as short-sort approach \cite{Naylor03} that use efficient weight selection procedures to minimize the processing overhead  can be applied. However, they are effective where the unknown system's impulse response is sparse \cite{Naylor05} or the weights are significantly different in value (e.g. amplitude of some weights are much larger than the others). 
\end{remark}

Therefore, each iteration of sequential PU-ACLMS requires $8(N+M)+2$ real multiplications and $8(N+M)$ real additions.

Compared to sequential PU-ACLMS, the computational cost of the stochastic PU-ACLMS algorithm 
requires evaluation of two additional quantities:
\begin{enumerate}
	\item One real multiplication and one real addition for the implementation of random coefficient selection, employing a simple random generator, e.g., linear congruential generator defied by \cite{press2007numerical}:
	\begin{equation}
	\label{eq:9}
	x(n+1) = \text{mod}\{(ax(n)+b),c\}, \ n\geq 0
	\end{equation}
	where $a$ and $b$ are some positive integers, $c$ is the modulus, and $x(0)$ is known as a seed for the random number generator.
	\item One real multiplication and one real addition for the random integer
	\begin{equation}
	\label{eq:10}
	\pi(n) = \frac{\beta-1}{c-1}x(n)+1
	\end{equation}
	where $\pi(n)$ is an independent random variable with discrete uniform distribution:
	\begin{equation}
	\label{eq:11}
	\text{Pr}\{\pi(n)=t\}=\frac{1}{\beta}
	\end{equation}
	\end{enumerate}
Consequently, the stochastic PU-ACLMS algorithm has the computational complexity of $8(N+M)+4$ real 
multiplications and $8(N+M)+2$ real additions per iteration. Table \ref{tbl:2} summarizes the 
respective computational complexity of the ACLMS algorithm and its partial-update implementations, per 
iteration for complex-valued data.
\begin{remark}
\label{rmk:3}
For large values of $N$, the PU-ACLMS algorithm is capable of lowering the full-complexity approximately 
by a factor of two. It can, however, lead to some performance deterioration. 
\end{remark}
\begin{table}[t]
\caption{Computational Complexity of ACLMS and PU-ACLMS Algorithms at Each Iteration.}
	\centering
		\begin{tabular}{lcc}
		\hline
		Algorithm & $\times$ & $+$ \\
				\hline 
		ACLMS	& $16N+2$ & $16N$ \\
		Sequential PU-ACLMS	& $8(N+M)+2$ & $8(N+M)$    \\
		Stochastic PU-ACLMS	& $8(N+M)+4$ & $8(N+M)+2$  \\
				\hline 
		\end{tabular} 
		\label{tbl:2}
\end{table}
\section{Performance Analysis}
\label{sec:4}
In this section, the performance of PU-ACLMS algorithm is studied in detail. More specifically, we investigate the steady-state performance, stability conditions, transient behavior  and the algorithm convergence are investigated.

\subsection{Steady-state Analysis}
In this sub-section, we use PU-ACLMS given by \eqref{eq:7a} and \eqref{eq:7b} to obtain a closed form for EMSE measure   defined as
\begin{align}
\label{eq:29}
\zeta(\infty) \triangleq \lim_{n \to \infty} \E{\left|e_{a}(n)\right|^2}
\end{align}
where $e_{a}(n)$ is defined as
\begin{equation}
e_{a}(n)=\bu^{\T}(n)\widetilde{\bh}(n)+\bu^{\H}(n)\widetilde{\bg}(n)
\end{equation}
with 
\begin{equation}
\widetilde{\bh}(n)=\bh^{o}-\bh(n),\ \ \widetilde{\bg}(n)=\bg^{o}-\bg(n)
\end{equation}
Let's also denote by $\bh_{\M}^{o}(n)$ and $\bg_{\M}^{o}(n)$ $M\times 1$ vectors which are formed at any time instant $n$ by stacking the elements of $\bh^o$ and $\bg^o$  with $i_{k}(n)=1$, respectively. Subtracting \eqref{eq:7a} and \eqref{eq:7b} from $\bh_{\M}^{o}(n)$ and $\bg_{\M}^{o}(n)$ gives
\begin{align}
\label{eq:12}
\widetilde{\bh}_{\M}(n,n+1)=\widetilde{\bh}_{\M}(n,n)-\mu e(n)\bu_{\M}^{*}(n)\\
\widetilde{\bg}_{\M}(n,n+1)=\widetilde{\bg}_{\M}(n,n)-\mu e(n)\bu_{\M}(n)
\label{eq:13}
\end{align}
where $\widetilde{\bh}_{\M}(n,n+1)=\bh_{\M}^{o}(n)-\bh_{\M}(n+1)$ and $\widetilde{\bg}_{\M}(n,n+1)=\bg_{\M}^{o}(n)-\bg_{\M}(n+1)$ are the partial-update coefficient errors at time instant $n$. Pre-multiplying both sides of \eqref{eq:12} by $\bu_{\M}^{\T}(n)$ and both sides of \eqref{eq:13} by $\bu_{\M}^{\H}(n)$ to yield 
\begin{align}
\label{eq:14}
\bu_{\M}^{\T}(n)\widetilde{\bh}_{\M}(n,n+1)&=\bu_{\M}^{\T}(n)\widetilde{\bh}_{\M}(n,n)-\mu \|\bu_{\M}(n)\|^{2} e(n)\\
\bu_{\M}^{\H}(n)\widetilde{\bg}_{\M}(n,n+1)&=\bu_{\M}^{\H}(n)\widetilde{\bg}_{\M}(n,n)-\mu \|\bu_{\M}(n)\|^{2} e(n)
\label{eq:15}
\end{align}
The partial-update \emph{a priori} estimation error,  $\e_{a}(n)$  and the partial-update \emph{a posteriori} estimation error,  $\e_{p}(n)$ are defined as:
\begin{align}
\label{eq:16}
\e_{a}(n)&=\bu_{\M}^{\T}(n)\widetilde{\bh}_{\M}(n,n)+\bu_{\M}^{\H}(n)\widetilde{\bg}_{\M}(n,n)\\
\e_{p}(n)&=\bu_{\M}^{\T}(n)\widetilde{\bh}_{\M}(n,n+1)+\bu_{\M}^{\H}(n)\widetilde{\bg}_{\M}(n,n+1)
\label{eq:17}
\end{align}
It is easy to show that 
\begin{equation}
\label{eq:18}
\e_{p}(n) = \e_{a}(n) - 2\mu e(n)\|\bu_{\M}(n)\|^{2}
\end{equation}
Solving for $e(n)$ we obtain
\begin{equation}
\label{eq:19}
e(n) = \frac{1}{2\mu \|\bu_{\M}(n)\|^{2}}\left[\e_{a}(n)-\e_{p}(n)\right] 
\end{equation}
Substituting \eqref{eq:19} into \eqref{eq:12} and \eqref{eq:13} yields
\begin{align}
&\widetilde{\bh}_{\M}(n,n+1) + \frac{\bu_{\M}^{*}(n)}{2\|\bu_{\M}(n)\|^{2}}\e_a(n)\nonumber \\
&\hspace{2.5cm} =\widetilde{\bh}_{\M}(n,n)+\frac{\bu_{\M}^{*}(n)}{2\|\bu_{\M}(n)\|^{2}}\e_p(n)
\label{eq:20}  \\
&\widetilde{\bg}_{\M}(n,n+1)+ \frac{\bu_{\M}(n)}{2\|\bu_{\M}(n)\|^{2}}\e_a(n)\nonumber \\
&\hspace{2.5cm}  =\widetilde{\bg}_{\M}(n,n)+\frac{\bu_{\M}(n)}{2\|\bu_{\M}(n)\|^{2}}\e_p(n)  \label{eq:21}
\end{align}
Taking the squared Euclidean norm of both sides of \eqref{eq:20} and \eqref{eq:21} the following energy conservation relations are obtained which describe the evolution of the weight error vectors:
\begin{align}
\label{eq:22}
& \left\|\widetilde{\bh}_{\M}(n,n+1)\right\|^{2} + \widetilde{\bh}_{\M}(n,n)\frac{\bu_{\M}^{*}(n)}{2\left\|\bu_{\M}(n)\right\|^{2}}\e_a(n)\nonumber \\
&\hspace{0.5cm} =\left\|\widetilde{\bh}_{\M}(n,n)\right\|^{2}
+\widetilde{\bh}_{\M}(n,n+1)\frac{\bu_{\M}^{*}(n)}{2\left\|\bu_{\M}(n)\right\|^{2}}\e_p(n)
\end{align}
and
\begin{align}
\label{eq:23}
&\left\|\widetilde{\bg}_{\M}(n,n+1)\right\|^{2} + \widetilde{\bg}_{\M}(n,n)\frac{\bu_{\M}^{*}(n)}{2\left\|\bu_{\M}(n)\right\|^{2}}\e_a(n)\nonumber \\
&\hspace{0.5cm} =\left\|\widetilde{\bg}_{\M}(n,n)\right\|^{2}
+\widetilde{\bg}_{\M}(n,n+1)\frac{\bu_{\M}^{*}(n)}{2\left\|\bu_{\M}(n)\right\|^{2}}\e_p(n)
\end{align}
Adding up \eqref{eq:22} and \eqref{eq:23} gives the weight error energy conservation equations 
\begin{align}
\label{eq:24}
& \left\|\widetilde{\bh}_{\M}(n,n+1)\right\|^{2}+\left\|\widetilde{\bg}_{\M}(n,n+1)\right\|^{2}+\frac{\left|\e_{a}(n)\right|^2}{2\left\|\bu_{\M}(n)\right\|^{2}}\nonumber \\
& \hspace{0.5cm} =\left\|\widetilde{\bh}_{\M}(n,n)\right\|^{2}+\left\|\widetilde{\bg}_{\M}(n,n)\right\|^{2}+\frac{\left|\e_{p}(n)\right|^2}{2\left\|\bu_{\M}(n)\right\|^{2}}
\end{align}
The expression \eqref{eq:24} delivers an exact relation between the partial-update a priori and a posteriori estimation errors and partial-update weight errors. At the steady-state we have \cite{Sayed2008}
\begin{align}
\label{eq:25}
\E{\|\widetilde{\bh}_{\M}(n,n+1)\|^{2}}&=\E{\|\widetilde{\bh}_{\M}(n,n)\|^{2}} <\infty\\
\E{\|\widetilde{\bg}_{\M}(n,n+1)\|^{2}}&=\E{\|\widetilde{\bg}_{\M}(n,n)\|^{2}} <\infty
\label{eq:26}
\end{align}
 Taking the statistical expectation of both sides of \eqref{eq:24},  at steady-state \eqref{eq:24} becomes
\begin{equation}
\label{eq:27}
\E{\frac{\left|\e_{a}(n)\right|^2}{2\left\|\bu_{\M}(n)\right\|^{2}}} = \E{\frac{\left|\e_{p}(n)\right|^2}{2\left\|\bu_{\M}(n)\right\|^{2}}} \ \text{as} \ n\rightarrow \infty
\end{equation}
Substituting \eqref{eq:18} into the right-hand-side of above expression, we obtain

\begin{align}
\label{eq:28}
\mu\E{\left|e(n)\right|^{2}\left\|\bu_{\M}(n)\right\|^{2}}=\E{\left|\e_{a}(n)\right|\left|e(n)\right|}\ \text{as} \ n\rightarrow \infty
\end{align}
In the sequel, the steady-state variance relation \eqref{eq:28} is used to obtain a closed-form expression for the EMSE measure. First, the following assumption is considered which is commonly used to study adaptive filters \cite{Sayed2008}:
\begin{assumption}\label{asp:2}\
\begin{enumerate}[(i)]
\item The a priori estimation error $\e_{a}(n)$ is statistically independent of the input regressor vector $\bu_{\M}(m)$ for all $n\neq m$.
\item (separation assumption \cite{Sayed2008}) $\bu_{\M}(n)$ is statistically independent of $e_{a}(n)$ and alternatively. This means that $\left\|\bu_{\M}(n)\right\|^{2}$ is also statistically independent of $e(n)$.
\end{enumerate}
\end{assumption}
Since $e(n)=e_{a}(n)+\u(n)$, \eqref{eq:28} can be rewritten as
\begin{align}
\label{eq:32}
&\mu \E{\left|e_{a}(n)\right|^{2}+2\left|e_{a}(n)\right|\left|\u(n)\right|+\left|\u(n)\right|^{2}}   \nonumber\\
& \hspace{1.5cm} =\E{\left|\e_{a}(n)\right|\left|e_{a}(n)\right|+\left|\e_{a}(n)\right|\left|\u(n)\right|}
\end{align}
Under \emph{Assumptions} \ref{asp:1} and \ref{asp:2}.i, the above expression simplifies to 
\begin{align}
\label{eq:33}
\mu \E{\left|e_{a}(n)\right|^{2}\left\|\bu_{\M}(n)\right\|^2}+\mu\s_{\u}^{2}\tr{\bC_{\bu_{\M}}}&=\E{\e_{a}(n)e_{a}(n)}
\end{align}
where 
\begin{align}
\label{eq:34}
\tr{\bC_{\bu_{\M}}}&=\tr{\E{\bu_{\M}(n)\bu_{\M}^{\H}(n)}}=\E{\left\|\bu_{\M}(n)\right\|^{2}}
\end{align}
For the first term in the RHS of \eqref{eq:32} we have
\begin{align}
\label{eq:35}
\E{\e_{a}(n)e_{a}(n)}&=\rho_{\M}\E{\left|e_{a}(n)\right|^{2}}=\rho_{\M}\zeta(\infty)
\end{align}
where $0 < \rho_{\M}\leq 1$ is a constant that shows how much $e_{a}(n)$ is lessened as a consequence of 
partial coefficient updates. Substituting \eqref{eq:34} into \eqref{eq:33} gives:
\begin{align}
\label{eq:36}
\zeta(\infty) = \frac{\mu}{\rho_{\M}}\left(\E{\left|e_{a}(n)\right|^{2}\left\|\bu_{\M}(n)\right\|^2}+\mu\s_{\u}^{2}\tr{\bC_{\bu_{\M}}}\right)
\end{align}
Employing \emph{Assumption} \ref{asp:2}.ii, $\E{\left|e_{a}(n)\right|^{2}\left\|\bu_{\M}(n)\right\|^2}$ can be separated into the product of two expectations as follows:
\begin{align}
\label{eq:39}
\E{\left|e_{a}(n)\right|^{2}\left\|\bu_{\M}(n)\right\|^2}= \E{\left|e_{a}(n)\right|^{2}}\E{\left\|\bu_{\M}(n)\right\|^2} 
\end{align}
Now substituting \eqref{eq:39} into \eqref{eq:36} gives the following steady-state EMSE expression:
\begin{equation}
\label{eq:40}
\zeta(\infty)=\frac{\mu\s_{\u}^{2}\tr{\bC_{\bu_{\M}}}}{\rho_{\M}-\mu \tr{\bC_{\bu_{\M}}}}
\end{equation}
\begin{remark}
For sufficiently small step-size $\mu$, at steady-state we expect to have:
\begin{equation}
\label{eq:37a}
\E{\left|e_{a}(n)\right|^{2}\left\|\bu_{\M}(n)\right\|^2} \ll \s_{\u}^{2}\tr{\bC_{\bu_{\M}}}
\end{equation}
and we neglect the term $\E{\left|e_{a}(n)\right|^{2}\left\|\bu_{\M}(n)\right\|^2}$ to derive an closed-form for EMSE as
\begin{align}
\label{eq:38a}
\zeta(\infty) \approx \frac{\mu\s_{\u}^{2}\tr{\bC_{\bu_{\M}}}}{\rho_{\M}}
\end{align}
\end{remark}
\begin{cor}
\label{cor:1}
If all the filter coefficients are updated (i.e. $M=N$, $\bmI_{\M}(n)=\bI_{N}$, $\rho_{\M}=1$ and 
$\bC_{\bu_M}=\bC_{\bu}$), then \eqref{eq:40} changes to 
\begin{align}
\label{eq:41}
\zeta_{\mathrm{ACLMS}}(\infty)=\frac{\mu\s_{\u}^{2}\tr{\bC_{\bu}}}{1-\mu\tr{\bC_{\bu}}}
\end{align}
For small step sizes this can be approximated as
\begin{align}
\label{eq:41b}
\zeta_{\mathrm{ACLMS}}(\infty) \approx \mu\s_{\u}^{2}\tr{\bC_{\bu}}
\end{align}
This expression is the steady-state EMSE for a \emph{full-update} PU-ACLMS algorithm.
\end{cor}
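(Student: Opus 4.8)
The plan is to obtain the corollary as a direct specialization of the general steady-state EMSE expression \eqref{eq:40}, so the bulk of the work is merely to verify that each of the stated full-update conditions is the correct limit of the corresponding partial-update quantity and then to substitute. First I would observe that setting $M=N$ forces the selection matrix to the identity, $\bmI_{\M}(n)=\bI_{N}$, so that no entries of the regressor are masked; consequently the sub-vector $\bu_{\M}(n)$ coincides with the full regressor $\bu(n)$, and likewise the partial weight-error sub-vectors $\widetilde{\bh}_{\M}(n,n)$ and $\widetilde{\bg}_{\M}(n,n)$ collapse onto $\widetilde{\bh}(n)$ and $\widetilde{\bg}(n)$. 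From the covariance identity in \eqref{eq:34} this immediately yields $\bC_{\bu_{\M}}=\E{\bu_{\M}(n)\bu_{\M}^{\H}(n)}=\E{\bu(n)\bu^{\H}(n)}=\bC_{\bu}$, so $\tr{\bC_{\bu_{\M}}}=\tr{\bC_{\bu}}$.

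The one point that is slightly more than bookkeeping is establishing $\rho_{\M}=1$. Recalling that $\rho_{\M}$ was introduced in \eqref{eq:35} through the defining relation $\E{\e_{a}(n)e_{a}(n)}=\rho_{\M}\,\E{|e_{a}(n)|^{2}}$ to quantify the attenuation of the a priori error caused by refreshing only a subset of the coefficients, I would argue that when $M=N$ the partial-update a priori error $\e_{a}(n)$ in \eqref{eq:16} reduces term by term to the full a priori error $e_{a}(n)$, since each $\M$-indexed sub-vector in \eqref{eq:16} is replaced by its full-length counterpart. Hence $\E{\e_{a}(n)e_{a}(n)}=\E{|e_{a}(n)|^{2}}$, and the defining relation \eqref{eq:35} then forces $\rho_{\M}=1$, i.e. there is no attenuation once the whole coefficient set is updated at every iteration.

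With these identifications in hand—$\rho_{\M}=1$, $\bC_{\bu_{\M}}=\bC_{\bu}$, and the replacement of every $\M$-indexed quantity by its full counterpart—I would simply substitute into \eqref{eq:40}. The denominator $\rho_{\M}-\mu\tr{\bC_{\bu_{\M}}}$ becomes $1-\mu\tr{\bC_{\bu}}$ and the numerator $\mu\s_{\u}^{2}\tr{\bC_{\bu_{\M}}}$ becomes $\mu\s_{\u}^{2}\tr{\bC_{\bu}}$, delivering \eqref{eq:41} verbatim. To recover the small step-size form \eqref{eq:41b}, I would note that for $\mu$ small enough that $\mu\tr{\bC_{\bu}}\ll 1$ the denominator is approximately unity, giving $\zeta_{\mathrm{ACLMS}}(\infty)\approx\mu\s_{\u}^{2}\tr{\bC_{\bu}}$, consistent with the analogous neglect already used in \eqref{eq:38a}.

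I do not anticipate a genuine obstacle, since the statement is a limiting case of a formula already derived rather than a new estimate. The only place demanding care is the justification of $\rho_{\M}=1$: one must confirm that the collapse of the partial a priori error onto the full a priori error is \emph{exact} when $M=N$, which follows because the selection operation in \eqref{eq:16} becomes the identity and introduces no residual masked cross-terms.
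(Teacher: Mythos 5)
Your proposal is correct and follows essentially the same route as the paper: the corollary is obtained by direct substitution of $\rho_{\M}=1$ and $\bC_{\bu_{\M}}=\bC_{\bu}$ into \eqref{eq:40}, followed by neglecting $\mu\tr{\bC_{\bu}}$ against $1$ in the denominator for small step sizes. The only difference is that you additionally justify why $M=N$ forces $\rho_{\M}=1$ and $\bC_{\bu_{\M}}=\bC_{\bu}$, whereas the paper simply lists these identifications as part of the corollary's hypothesis; this extra care is harmless and arguably an improvement.
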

\begin{cor}
\label{cor:2}
For the PU-ACLMS algorithm we have\footnote{This relation holds for stationary input signals.} 
\begin{equation}
\label{eq:cr2_1}
\bC_{\bu_{\M}}=\frac{M}{N}\bC_{\bu},\ \ \rho_{\M}=\frac{M}{N}
\end{equation}
Replacing \eqref{eq:cr2_1} in \eqref{eq:41} we obtain 
\begin{align}
\label{eq:par}
\zeta(\infty)&=\frac{\mu\s_{\u}^{2}\frac{M}{N}\tr{\bC_{\bu_{\M}}}}{\frac{M}{N}-\mu\tr{\bC_{\bu_{\M}}}}  =  \frac{\mu\s_{\u}^{2}\tr{\bC_{\bu}}}{1-\mu\tr{\bC_{\bu}}}
\end{align}
which is the same steady-state EMSE as for the \emph{full-update} PU-ACLMS algorithm. 
\end{cor}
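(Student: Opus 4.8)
The plan is to separate the statement into its two genuine ingredients — the covariance/trace identity $\bC_{\bu_{\M}}=\frac{M}{N}\bC_{\bu}$ together with $\rho_{\M}=\frac{M}{N}$ asserted in \eqref{eq:cr2_1}, and the purely algebraic substitution into \eqref{eq:40} — and to observe that once the two identities are in hand the conclusion \eqref{eq:par} is immediate, since the common factor $\frac{M}{N}$ cancels between numerator and denominator. All the real work therefore lies in justifying \eqref{eq:cr2_1}, and both identities rest on the same two facts: that a stationary regressor has equal diagonal covariance entries $\E{|u(n-k+1)|^{2}}=\s_{\u}^{2}$ for every tap, and that each coordinate is retained by $\bmI_{\M}(n)$ with marginal frequency $M/N$, independently of the data.

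First I would establish the trace relation, which is all that \eqref{eq:40} actually needs. Using the defining identity \eqref{eq:34}, $\tr{\bC_{\bu_{\M}}}=\E{\|\bu_{\M}(n)\|^{2}}=\E{\sum_{k=1}^{N} i_{k}(n)\,|u(n-k+1)|^{2}}$. Because the selection indices are generated independently of the input and each has $\E{i_{k}(n)}=M/N$, and because stationarity makes every term $\E{|u(n-k+1)|^{2}}$ equal to $\s_{\u}^{2}$, this collapses to $M\s_{\u}^{2}=\frac{M}{N}\,N\s_{\u}^{2}=\frac{M}{N}\tr{\bC_{\bu}}$. The same independence argument, applied entrywise to $\E{\bmI_{\M}(n)\bu(n)\bu^{\H}(n)}=\E{\bmI_{\M}(n)}\bC_{\bu}=\frac{M}{N}\bC_{\bu}$, promotes this to the full matrix identity in \eqref{eq:cr2_1}.

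Next I would pin down $\rho_{\M}=M/N$ through its definition \eqref{eq:35}, namely $\E{\e_{a}(n)e_{a}(n)}=\rho_{\M}\E{|e_{a}(n)|^{2}}$. Writing the full a priori error as a sum of per-tap contributions $e_{a}(n)=\sum_{k=1}^{N}\phi_{k}(n)$ with $\phi_{k}(n)=\widetilde{h}_{k}(n)u(n-k+1)+\widetilde{g}_{k}(n)u^{*}(n-k+1)$, the partial error is the selected subsum $\e_{a}(n)=\sum_{k=1}^{N} i_{k}(n)\phi_{k}(n)$. Expanding the cross-correlation and invoking the same decoupling of $i_{k}(n)$ from the data (consistent with the separation principle in \emph{Assumption}~\ref{asp:2}) gives $\E{\e_{a}(n)e_{a}^{*}(n)}=\frac{M}{N}\sum_{k,j}\E{\phi_{k}(n)\phi_{j}^{*}(n)}=\frac{M}{N}\E{|e_{a}(n)|^{2}}$, that is, $\rho_{\M}=M/N$. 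Substituting both identities into \eqref{eq:40} then yields $\zeta(\infty)=\frac{\mu\s_{\u}^{2}\frac{M}{N}\tr{\bC_{\bu}}}{\frac{M}{N}\left(1-\mu\tr{\bC_{\bu}}\right)}$, and cancelling $M/N$ reproduces \eqref{eq:41}.

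The hard part will not be the algebra but the probabilistic justification of the two claims $\E{i_{k}(n)}=M/N$ and the data/selection independence. For the \emph{stochastic} scheme these hold cleanly provided the subsets $\{\bmS_{t}\}$ partition $\bS$ into $\beta$ blocks chosen with equal probability $1/\beta$, which forces $\E{i_{k}(n)}=1/\beta$; this equals $M/N$ exactly only when $M$ divides $N$, so that $\beta=N/M$. For the \emph{sequential} scheme $i_{k}(n)$ is deterministic at each $n$, and the identities must instead be read as time-averages over one round-robin cycle of length $\beta$, where stationarity (the footnote's hypothesis) is precisely what makes the per-tap second moments uniform so that the cycle average recovers $M/N$. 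I would therefore carry out the proof under the implicit convention $\beta=N/M$ and flag the case $M\nmid N$, where $\beta=\lceil N/M\rceil$ breaks the exact equalities, as holding only approximately.
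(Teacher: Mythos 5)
Your proposal is correct, and it actually supplies more than the paper does: the paper gives no derivation of \eqref{eq:cr2_1} at all --- it simply asserts $\bC_{\bu_{\M}}=\frac{M}{N}\bC_{\bu}$ and $\rho_{\M}=\frac{M}{N}$ with a footnote about stationarity and then performs the substitution into the steady-state formula --- whereas you derive both identities from the two primitive facts $\E{i_{k}(n)}=M/N$ and the independence of the selection pattern from the data, combined with stationarity of the regressor. Your per-tap decomposition $e_{a}(n)=\sum_{k}\phi_{k}(n)$, $\e_{a}(n)=\sum_{k}i_{k}(n)\phi_{k}(n)$ for pinning down $\rho_{\M}$ is exactly the argument the paper leaves implicit, and your caveats are genuine hypotheses the paper suppresses: the identities are exact only when $M$ divides $N$ so that $\beta=N/M$ (the paper imposes this only later, in Assumption \ref{asp:3}(i), for the convergence-rate analysis), and for the sequential scheme they hold as averages over one round-robin cycle rather than per iteration. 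Two points of reconciliation with the text are worth noting. First, $\bC_{\bu_{\M}}$ as introduced around \eqref{eq:34} is the covariance of the $M\times 1$ subvector $\bu_{\M}(n)$, so the matrix identity $\bC_{\bu_{\M}}=\frac{M}{N}\bC_{\bu}$ only makes dimensional sense under the $N\times N$ masked reading $\E{\bmI_{\M}(n)\bu(n)\bu^{\H}(n)}$ used in \eqref{eq:55}, which is the one you adopt; in any case only the trace relation enters \eqref{eq:40}, and that is unambiguous. Second, your final display $\zeta(\infty)=\mu\s_{\u}^{2}\frac{M}{N}\tr{\bC_{\bu}}\big/\bigl(\frac{M}{N}\left(1-\mu\tr{\bC_{\bu}}\right)\bigr)$ is the algebraically consistent form; the intermediate expression printed in \eqref{eq:par} writes $\tr{\bC_{\bu_{\M}}}$ where $\tr{\bC_{\bu}}$ is required for the cancellation to go through, a typo your version silently corrects while arriving at the same conclusion that the steady-state EMSE is independent of $M$.
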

According to corollaries \ref{cor:1} and \ref{cor:2}, the steady-state performance of the PU-ACLMS algorithm can be summarized as the following Proposition.
\begin{prop}\label{prop:1}
Let Assumptions \ref{asp:1} and \ref{asp:2} hold. Then, regardless of the values of $M$, both sequential and stochastic partial-update schemes have the same steady-state EMSE as that of the full-update ACLMS algorithm. 
\end{prop}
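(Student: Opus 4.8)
The plan is to obtain the proposition as an immediate consequence of the general steady-state EMSE formula \eqref{eq:40} together with Corollaries \ref{cor:1} and \ref{cor:2}. The derivation leading to \eqref{eq:40} never committed to a particular coefficient-selection rule, so that formula applies verbatim to both the sequential and the stochastic schemes. The entire argument therefore reduces to verifying that the two scheme-dependent quantities in \eqref{eq:40}, namely $\tr{\bC_{\bu_{\M}}}$ and $\rho_{\M}$, take the common values $\frac{M}{N}\tr{\bC_{\bu}}$ and $\frac{M}{N}$ asserted in \eqref{eq:cr2_1}, irrespective of which rule is used.

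First I would establish the trace scaling. By Assumption \ref{asp:1}(i) the regressor $\bu(n)$ is stationary, so every diagonal entry of $\bC_{\bu}=\E{\bu(n)\bu^{\H}(n)}$ equals the common input power $\s_{\u}^{2}$ and $\tr{\bC_{\bu}}=N\s_{\u}^{2}$. For any fixed choice of $M$ indices, $\bu_{\M}(n)$ retains exactly $M$ of these equal-power coordinates, so $\tr{\bC_{\bu_{\M}}}=M\s_{\u}^{2}=\frac{M}{N}\tr{\bC_{\bu}}$. The key point is that this value does not depend on \emph{which} $M$ coordinates are retained; consequently the sequential round-robin choice and the uniform random choice governed by \eqref{eq:11} yield the identical trace, and for the stochastic scheme the extra expectation over the random index $\pi(n)$ leaves the value unchanged.

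Next I would justify $\rho_{\M}=\frac{M}{N}$. Recall that $\rho_{\M}$ was introduced in \eqref{eq:35} through $\E{\e_a(n)e_a(n)}=\rho_{\M}\E{|e_a(n)|^2}$, measuring the fraction of the full a priori error $e_a(n)$ that survives once only $M$ coordinates are active. Writing the partial error $\e_a(n)$ of \eqref{eq:16} as the restriction of $e_a(n)$ to the selected coordinates and invoking stationarity together with the independence in Assumption \ref{asp:2}, the cross-correlation $\E{\e_a(n)e_a(n)}$ collects exactly the contribution of the $M$ active coordinates out of $N$, so that $\rho_{\M}=M/N$ for a fixed selection and, after averaging under the uniform law \eqref{eq:11}, for the stochastic scheme as well. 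Substituting $\tr{\bC_{\bu_{\M}}}=\frac{M}{N}\tr{\bC_{\bu}}$ and $\rho_{\M}=\frac{M}{N}$ into \eqref{eq:40} then cancels the factor $M/N$ from numerator and denominator, recovering $\zeta(\infty)=\frac{\mu\s_{\u}^{2}\tr{\bC_{\bu}}}{1-\mu\tr{\bC_{\bu}}}$, which by Corollary \ref{cor:1} is precisely the full-update ACLMS EMSE and is manifestly independent of $M$; since the same substitutions are reached for both rules, both schemes attain this value for every admissible $M$, which is the claim.

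I expect the main obstacle to be the second step. Unlike the trace, which is pinned down cleanly by stationarity, the constant $\rho_{\M}$ is defined only implicitly through \eqref{eq:35}, so identifying it with exactly $M/N$ requires care in handling the statistical dependence between the selection matrix $\bmI_{\M}(n)$ and the regressor $\bu(n)$, and in verifying that averaging over the random selection in the stochastic case introduces no cross terms beyond those already neutralized by the separation part of Assumption \ref{asp:2}. Once that accounting is done carefully, the cancellation in \eqref{eq:40} is purely algebraic and the proposition follows for arbitrary $M$.
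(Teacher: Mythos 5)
Your proposal is correct and follows essentially the same route as the paper: the paper's own ``proof'' of Proposition \ref{prop:1} is precisely the observation that substituting the scaling relations \eqref{eq:cr2_1} of Corollary \ref{cor:2} into the general formula \eqref{eq:40} cancels the factor $M/N$ and recovers the full-update expression \eqref{eq:41} of Corollary \ref{cor:1}. The only difference is that you additionally sketch a justification of $\tr{\bC_{\bu_{\M}}}=\frac{M}{N}\tr{\bC_{\bu}}$ and $\rho_{\M}=\frac{M}{N}$ for both selection rules, which the paper merely asserts in a footnote; this is a useful supplement (with the minor caveat that the common diagonal entry of $\bC_{\bu}$ is the input power $\E{|u(n)|^{2}}$ rather than the noise variance $\s_{\u}^{2}$, a slip that does not affect the ratio and hence the cancellation).
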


\subsection{Stability Analysis}
To gain further insight into the performance of partial-update PU-ACLMS algorithms, we shall continue to examine the stability of PU-ACLMS algorithm. To this end, we consider the update equations \eqref{eq:6a} and \eqref{eq:6b}. Let us also define 
the partial-augmented weight vector $\widetilde{\bw}(n)$ as
\begin{align}
\label{eq:50}
\widetilde{\bw}(n)\triangleq\bw^{o}-\bw(n)
\end{align}
where $\bw^{o}=\left[\bh^{o\T},\bg^{o\T}\right]^{\T}$. Substracting both sides of \eqref{eq:weq} from the optimum solution $\bw^{o}$ gives
\begin{align}
\label{eq:51}
\widetilde{\bw}\left(i+1\right)=\widetilde{\bw}(n)+\mu e(n)\bmJ_{\M}(n)\bz(n)
\end{align}
It is useful to mention that the expression \eqref{eq:51} leaves non-updated subsets of $\tilde{\bw}(n)$ and $\tilde{\bw}(n+1)$ identical. From \eqref{eq:51}, the output-error $e(n)$ can be rewritten as 
\begin{align}
\label{eq:52}
e(n)=\bz^{\H}(n)\tilde{\bw}(n)+\u(n)
\end{align}
Then, the weight-error-recursion in \eqref{eq:51} becomes
\begin{multline}
\label{eq:53}
\widetilde{\bw}(n+1)=\left[\bI-\mu\bmJ_{\M}(n)\bz(n)\bz^{\H}(n)\right]\widetilde{\bw}(n)\\
 -\mu \bmJ_{\M}(n)\bz(n)\u(n)
\end{multline}
In the sequel, recursion \eqref{eq:53} is used to obtain the mean and mean-square stability conditions.

\subsubsection{Mean Stability}
\label{sub:4A}
Applying statistical expressions to the both sides of \eqref{eq:53} and employing \emph{Assumption }\ref{asp:1}, yields
\begin{align}
\label{eq:54}
\E{\widetilde{\bw}(n+1)}&=\left(\bI-\mu\E{\bmJ_{\M}(n)\bz(n)\bz^{\H}(n)}\right)\E{\widetilde{\bw}(n)}\nonumber\\
&=\left(\bI-\mu{\bC^*_{\bz_{\M}}}\right)\E{\widetilde{\bw}(n)}
\end{align}
where $\bC_{\bz_{\M}}$ is the partial-augmented covariance matrix of $\bz(n)$, given by
\begin{align}
\label{eq:55}
\bC_{\bz_{\M}}&\triangleq\E{\bmJ_{\M}(n)\bz(n)\bz^{\H}(n)}\nonumber\\
&=\E{
\begin{bmatrix}
\bmI_{\M}(n)  & 0  \\
0  &  \bmI_{\M}(n)     
\end{bmatrix}
\begin{bmatrix}
\bu(n)\bu^{\H}(n)  & \bu(n)\bu^{\T}(n)   \\
\bu^{*}(n)\bu^{\H}(n) &  \bu^{*}(n)\bu^{\T}(n)     
\end{bmatrix}}\nonumber\\ 
&=\begin{bmatrix}
\bC_{\bu_{\M}}  & \bD_{\bu_{\M}}   \\
\bD_{\bu_{\M}}^{*} &  \bC_{\bu_{\M}}^{*}     
\end{bmatrix} 
\end{align}
The matrices $\bC_{\bu_{\M}}$ and $\bD_{\bu_{\M}}$ are respectively denoted as the partial covariance and partial complementary covariance matrices of $\bu_{\M}(n)$.

We establish the following Proposition which guarantees asymptotic unbiasedness of the PU-ACLMS algorithm.
\begin{prop}
\textbf{\emph{(Mean Stability)}}
\label{prop:2}
Let the widely-linear model \eqref{eq:1} and Assumption \ref{asp:1} hold. Then, the PU-ACLMS algorithm is asymptotically unbiased for any initial condition if, and only if, the positive step-size parameter $\mu$ satisfies
\begin{equation}
\label{eq:56}
0<\mu<\frac{2}{\lambda_{\mathrm{max}}\left(\bC_{\bz_{\M}}\right)}
\end{equation}
where $\lambda_{\mathrm{max}}\left(\bC_{\bz_{\M}}\right)$ is the largest eigenvalue of the partial-augmented matrix $\bC_{\bz_{\M}}$. 
\end{prop}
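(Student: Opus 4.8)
The plan is to read \eqref{eq:54} as an autonomous, linear, time-invariant recursion in the mean weight-error vector and to reduce the stability question to a standard spectral-radius condition on its transition matrix. First I would unroll \eqref{eq:54} to obtain $\E{\widetilde{\bw}(n)}=\left(\bI-\mu\bC^{*}_{\bz_{\M}}\right)^{n}\E{\widetilde{\bw}(0)}$, so that asymptotic unbiasedness, $\E{\widetilde{\bw}(n)}\to 0$ for \emph{every} initial condition, is equivalent to the transition matrix $\bA\triangleq\bI-\mu\bC^{*}_{\bz_{\M}}$ being convergent, i.e.\ $\rho(\bA)<1$. This is the clean route because \eqref{eq:54} already absorbs the noise term of \eqref{eq:53} (it vanishes in the mean under Assumption \ref{asp:1}) and the data-dependent matrix has been replaced by its expectation $\bC_{\bz_{\M}}$.

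Next I would exploit the structure of $\bC_{\bz_{\M}}$ exhibited in \eqref{eq:55}. Being an averaged Gramian of the form $\E{\bmJ_{\M}(n)\bz(n)\bz^{\H}(n)}$ with the selection averaged out, it is Hermitian, and under Assumption \ref{asp:1}.iv (which forces $\bC_{\bu_{\M}}>0$, and hence positive definiteness of the partial-augmented covariance) all of its eigenvalues are real and strictly positive. Because a Hermitian matrix has a real spectrum and $\bC^{*}_{\bz_{\M}}=\bC_{\bz_{\M}}^{\T}$ shares that spectrum, the conjugation in $\bA$ is immaterial: $\bA$ is itself Hermitian, hence normal and unitarily diagonalizable, with eigenvalues $1-\mu\lambda_{k}$ where $\{\lambda_{k}\}$ are the (positive) eigenvalues of $\bC_{\bz_{\M}}$. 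This normality is what makes the iff clean, since it rules out defective modes on the unit circle that could otherwise complicate the boundary analysis.

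Finally I would translate $\rho(\bA)<1$ into a condition on $\mu$. Since the eigenvalues are the real numbers $1-\mu\lambda_{k}$, we have $\rho(\bA)<1$ iff $|1-\mu\lambda_{k}|<1$ for all $k$, i.e.\ $0<\mu\lambda_{k}<2$, i.e.\ $0<\mu<2/\lambda_{k}$ for every $k$; the binding constraint is the largest eigenvalue, which yields \eqref{eq:56}. The necessity direction is immediate from the same calculation: if $\mu$ violates the bound then $1-\mu\lambda_{\mathrm{max}}\le -1$, so the mode along the corresponding eigenvector does not decay and $\E{\widetilde{\bw}(n)}$ fails to vanish for a generic initial error.

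The hard part will be the spectral bookkeeping of $\bC_{\bz_{\M}}$ rather than the recursion itself: one must verify that the partial-augmented covariance in \eqref{eq:55} is genuinely Hermitian and positive definite despite the one-sided selection matrix $\bmJ_{\M}(n)$ and the conjugate off-diagonal blocks, and that it is a single fixed matrix (so that the recursion is truly time-invariant). This is transparent for the stochastic scheme, where $\bmJ_{\M}(n)$ is independent of $\bz(n)$ and $\E{\bmJ_{\M}(n)}=\tfrac{M}{N}\bI$ gives $\bC_{\bz_{\M}}=\tfrac{M}{N}\bC_{\bz}$; for the sequential scheme the selection is deterministic and periodic, so the cleanest justification is to interpret $\bC_{\bz_{\M}}$ as the averaged (time-invariant) covariance and to lean on Assumption \ref{asp:1} to drop the cross terms, after which the eigenvalue argument above applies verbatim.
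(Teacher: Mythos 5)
Your proposal is correct and follows essentially the same route as the paper's Appendix~A: both diagonalize $\bC^{*}_{\bz_{\M}}$, reduce the mean recursion \eqref{eq:54} to decoupled modes with eigenvalues $1-\mu\lambda_{k}$, and obtain \eqref{eq:56} from $\left|1-\mu\lambda_{k}\right|<1$. Your additional attention to the necessity direction and to verifying that $\bC_{\bz_{\M}}$ is genuinely Hermitian and positive definite (a point the paper simply asserts) is a welcome refinement but not a different argument.
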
  
\begin{proof}
See Appendix \ref{apx:1}.
\end{proof}

\subsubsection{Mean-square Stability}
\label{sub:4B}
By computing the weighted norm of both sides of \eqref{eq:53}, for an arbitrary Hermitioan weighting matrix  $\bmSig>0$, and applying the expectation operator together with employing Assumption \ref{asp:1}, we arrive at the following weighted variance relation:
\begin{multline}
\label{eq:57}
\E{\left\|\widetilde{\bw}(n+1)\right\|_{\bmSig}^{2}}=\E{\left\|\widetilde{\bw}(n)\right\|_{{\bmSig}^{'}}^{2}} \\
 +\mu^{2}\s_{\u}^{2}\E{\bz^{\H}(n)\bmJ_{\M}(n)\bmSig\bmJ_{\M}(n)\bz(n)}
\end{multline}
\begin{multline}
\label{eq:58}
\bmSig' = \bmSig-\mu\bmSig\bmJ_{\M}(n)\bz(n)\bz^{\H}(n)-\mu\bz(n)\bz^{\H}(n)\bmJ_{\M}(n)\bmSig \\
 +\mu^{2}\bz(n)\bz^{\H}(n)\bmJ_{\M}(n)\bmSig\bmJ_{\M}(n)\bz(n)\bz^{\H}(n)
\end{multline}
Under \emph{Assumptions} \ref{asp:1} and \ref{asp:2}, $\widetilde{\bw}(n)$ is independent of both ${\bmSig}^{'}$ and $\bz(n)$. Thus, we can split $\E{\left\|\widetilde{\bw}(n)\right\|_{{\bmSig'}}^{2}}$ into
\begin{align}
\label{eq:59}
\E{\left\|\widetilde{\bw}(n)\right\|_{{\bmSig'}}^{2}}=\E{\left\|\widetilde{\bw}(n)\right\|_{\mathsf{E}[\bmSig']}^{2}}
\end{align}
with the weighting matrix ${\bmSig}^{'}$ replaced by its mean, denoted by $\bmGa$, i.e. $\bmGa \triangleq \E{\bmSig'}$. 
In this way, the recursions \eqref{eq:57} and \eqref{eq:58} are rewritten as follows:
\begin{multline}
\label{eq:61}
\E{\left\|\widetilde{\bw}(n+1)\right\|_{\bmSig}^{2}}=\E{\left\|\widetilde{\bw}(n)\right\|_{{\bmGa}}^{2}}  \\
 +\mu^{2}\s_{\u}^{2}\E{\left\|\bmJ_{\M}(n)\bz(n)\right\|_{\bmSig}^{2}}
\end{multline}
\begin{multline}
\label{eq:62}
\bmGa = \bmSig -\mu \bmSig \bC_{\bz_{\M}}-\mu \bC_{\bz_{\M}}\bmSig \\
 +\mu^{2}\E{\left\|\bmJ_{\M}(n)\bz(n)\right\|_{\bmSig}^{2}\bz(n)\bz^{\H}(n)}
\end{multline}
By defining $\bga \triangleq \vect{\E{\bmGa}}$ and $\bsig \triangleq \vect{\E{\bmSig}}$  
in addition to applying the Kronecker product notation property\footnote{For any matrices $\mathbf{A}$, $\mathbf{B}$ and $\bmSig$  with compatible dimensions we have
\begin{align*}
\vect{\mathbf{A}\bmSig \mathbf{B}}&=\left(\mathbf{B}^{\T}\otimes \mathbf{A}\right)\vect{\bmSig},
\end{align*}} 
we can modify \eqref{eq:62} as 
\begin{multline}
\label{eq:66}
\vect{\bmGa} = \vect{\bmSig}-\mu \vect{\bmSig \bC_{\bz_{\M}}}-\mu\vect{\bC_{\bz_{\M}}\bmSig} \\
+\mu^{2}\E{\vect{\left\|\bmJ_{\M}(n)\bz(n)\right\|_{\bmSig}^{2}\bz(n)\bz^{\H}(n)}}
\end{multline}
After vectorization, the relations \eqref{eq:61} and \eqref{eq:62} become:
\begin{align}
\label{eq:67}
\E{\left\|\widetilde{\bw}(n+1)\right\|_{\bsig}^{2}}&=\E{\left\|\widetilde{\bw}(n)\right\|_{{\bmF\bsig}}^{2}}+\mu^{2}\s_{\u}^{2}\bc_{\M}^{\T}\bsig
\end{align}
where the coefficient matrix $\bmF$ is given by
\begin{align}
\label{eq:68}
&\bmF  = \bI_{\left(2N\right)^{2}}-\mu\left(\bC_{\bz_{\M}}^{\T}\otimes\bI_{2N}+\bI_{2N}\otimes \bC_{\bz_{\M}}\right) \nonumber \\
& \hspace{0.4cm} +\mu^{2}\E{\left(\bz(n)\bz^{\H}(n)\bmJ_{\M}(n)\right)^{\T}\otimes \left(\bz(n)\bz^{\H}(n)\bmJ_{\M}(n)\right)}
\end{align}
and the vector $\bc_{\M}$ is defined as
\begin{align}
\label{eq:69}
\bc_{\M}=\vect{\E{\bmJ_{\M}(n)\bz(n)\bz^{\H}(n)\bmJ_{\M}(n)}}
\end{align}
which is derived from
\begin{align}
\label{eq:70}
&\E{\left\|\bmJ_{\M}(n)\bz(n)\right\|_{\bmSig}^{2}}=\E{\bz^{\H}(n)\bmJ_{\M}(n)\bmSig\bmJ_{\M}(n)\bz(n)}\nonumber\\
&\hspace{0.5cm}=\tr{\E{\bmJ_{\M}(n)\bz(n)\bz^{\H}(n)\bmJ_{\M}(n)\bmSig}}\nonumber\\
&\hspace{0.5cm}=\tr{\E{\bmJ_{\M}(n)\bz(n)\bz^{\H}(n)\bmJ_{\M}(n)}\bmSig}\nonumber\\
&\hspace{0.5cm}=\text{vec}^{\T}\left\{\E{\bmJ_{\M}(n)\bz(n)\bz^{\H}(n)\bmJ_{\M}(n)}\right\}\bsig\nonumber\\
&\hspace{0.5cm}=\bc_{\M}^{\T}\bsig
\end{align}
Note that, the ${\left(2N\right)^{2}\times \left(2N\right)^{2}}$ coefficient matrix $\bmF$ can be expressed as:
\begin{align}
\label{eq:71}
\bmF(\mu) = \bI-\mu\bP+\mu^{2}\bQ
\end{align}
where $\left(2N\right)^{2}\times\left(2N\right)^{2}$ matrices $\left\{\bP,\bQ\right\}$ are given by 
\begin{align}
\label{eq:72}
\bP &= \bC_{\bz_{\M}}^{\T}\otimes\bI_{2N}+\bI_{2N}\otimes \bC_{\bz_{\M}}\\
\label{eq:73}
\bQ &= \E{\left(\bz(n)\bz^{\H}(n)\bmJ_{\M}(n)\right)^{\T}\otimes \left(\bz(n)\bz^{\H}(n)\bmJ_{\M}(n)\right)}
\end{align}

To derive the stability conditions for the step-size parameter $\mu$, we employ the state-space model developed in \cite{Sayed2008}. Here, we avoid to expose the corresponding argument, except to state that the stability of matrix $\bmF$ ensures the mean-square stability of the PU-ACLMS algorithms. In this way, the following proposition holds.
\begin{prop}
\textbf{\emph{(Mean-square Stability)}}
\label{prop:3}
Let the data $\left\{d(n),\bu(n)\right\}$ satisfy the widely-linear model \eqref{eq:1} and the independence assumption \ref{asp:1} holds. Then, the PU-ACLMS algorithm is mean-square stable if, and only if, the step-size $\mu$ is chosen to satisfy 
\begin{align}
\label{eq:74}
0<\mu<\min\left\{\frac{1}{\lambda_{\mathrm{max}}\left(\bP^{-1}\bQ\right)},\frac{1}{\max\left\{\lambda\left\{\bG\right\}\in\R^{>0}\right\}}\right\}
\end{align}
where $\bP$ and $\bQ$ are defined by \eqref{eq:72} and \eqref{eq:73}, and $\bG\in\C^{M^{2}\times M^{2}}$ is defined by the following block matrix 
\begin{align}
\label{eq:75}
\bG=
\begin{bmatrix}
\bP/2 & -\bQ/2   \\
\bI &  \emph{\textbf{0} }   
\end{bmatrix}
\end{align}
\end{prop}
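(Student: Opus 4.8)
The plan is to reduce mean-square stability to the requirement that the coefficient matrix $\bmF$ in \eqref{eq:71} be stable, and then to convert that requirement into the two explicit bounds on $\mu$ in \eqref{eq:74}. To make the reduction precise, write $W_n(\bsig) \triangleq \E{\|\widetilde{\bw}(n)\|_{\bsig}^{2}}$, so that the vectorized variance relation \eqref{eq:67} reads $W_{n+1}(\bsig) = W_n(\bmF\bsig) + \mu^{2}\s_{\u}^{2}\bc_{\M}^{\T}\bsig$. Iterating gives $W_n(\bsig) = W_0(\bmF^{n}\bsig) + \mu^{2}\s_{\u}^{2}\bc_{\M}^{\T}\big(\sum_{k=0}^{n-1}\bmF^{k}\big)\bsig$, which stays bounded and converges for every weighting $\bsig$ and every initial condition if, and only if, $\bmF^{n}\to 0$, i.e. every eigenvalue of $\bmF$ lies strictly inside the unit disk. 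This is exactly the state-space reduction invoked from \cite{Sayed2008}, so the task becomes to characterize all $\mu>0$ for which $\bmF(\mu)=\bI-\mu\bP+\mu^{2}\bQ$ has spectral radius below one.

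Next I would follow the eigenvalue trajectories of $\bmF(\mu)$ as $\mu$ increases from $0$. At $\mu=0$ we have $\bmF=\bI$, so every eigenvalue begins at $+1$; since $\bP$ in \eqref{eq:72} is Hermitian and, under Assumption \ref{asp:1}(iv), positive definite, the first-order term $-\mu\bP$ perturbs these eigenvalues to $1-\mu\lambda_i(\bP)+O(\mu^{2})$ and thus drives them strictly into the disk along the real axis for small $\mu>0$. Stability therefore holds up to the first $\mu$ at which some eigenvalue touches the unit circle. A crossing at $+1$ occurs exactly when $\bmF(\mu)-\bI=-\mu(\bP-\mu\bQ)$ is singular, i.e. $\det(\bP-\mu\bQ)=0$, equivalently $1/\mu\in\mathrm{spec}(\bP^{-1}\bQ)$ (the inverse exists because $\bP\succ0$); the smallest positive $\mu$ for which this holds is $1/\lambda_{\mathrm{max}}(\bP^{-1}\bQ)$, which is the first bound in \eqref{eq:74}.

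For the second bound I would linearize the quadratic matrix pencil that governs a crossing at $-1$. Such a crossing occurs when $\bmF(\mu)+\bI=2\bI-\mu\bP+\mu^{2}\bQ$ is singular. Applying the Schur-complement determinant identity to $\bG=\begin{bmatrix}\bP/2 & -\bQ/2\\ \bI & \mathbf{0}\end{bmatrix}$ gives $\det(\bG-\lambda\bI)=c\,\det(2\lambda^{2}\bI-\lambda\bP+\bQ)$ for a nonzero constant $c$, and substituting $\lambda=1/\mu$ shows this vanishes precisely when $\det(2\bI-\mu\bP+\mu^{2}\bQ)=0$. Hence the $-1$-crossing values of $\mu$ are exactly the reciprocals of the eigenvalues of $\bG$; because $\mu$ is real and positive, only the positive real eigenvalues of $\bG$ are relevant, and the earliest such crossing is at $1/\max\{\lambda\{\bG\}\in\R^{>0}\}$, the second bound. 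Taking the minimum of the two thresholds yields the interval \eqref{eq:74}: below it no eigenvalue has reached the circle, so $\bmF$ is stable, while at the minimum an eigenvalue sits on the circle and $\bmF$ is no longer stable, which establishes the \emph{only if} direction.

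The step I expect to be the main obstacle is justifying that these two real-axis crossings actually delimit the stability region, i.e. that no eigenvalue of $\bmF(\mu)$ leaves the unit disk at a genuinely complex point $e^{j\theta}$ with $\theta\neq0,\pi$ before $\mu$ reaches the smaller of the two thresholds. Establishing this rigorously relies on the Hermitian and positive structure of $\bP$ together with the structure of $\bQ$ to control the eigenvalue branches of the pencil $2\bI-\mu\bP+\mu^{2}\bQ$ and to keep the binding branch real; once that structural fact is secured, the two determinant computations above are routine.
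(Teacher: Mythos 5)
Your proposal follows essentially the same route as the paper's Appendix B: reduce mean-square stability to $|\lambda(\bmF)|<1$ for $\bmF(\mu)=\bI-\mu\bP+\mu^{2}\bQ$, obtain the first bound from singularity of $\bP-\mu\bQ$ (equivalently $1/\mu\in\mathrm{spec}(\bP^{-1}\bQ)$, citing the same result from \cite{Sayed2008}), and obtain the second bound by linearizing the quadratic pencil $2\bI-\mu\bP+\mu^{2}\bQ$ through the block companion matrix $\bG$ and the determinant identity $\det(\bI-\mu\bG)=0 \Leftrightarrow \det(2\bI-\mu\bP+\mu^{2}\bQ)=0$. The one point where you diverge is the "main obstacle" you leave open --- whether an eigenvalue of $\bmF(\mu)$ could exit the unit disk at a non-real point before either real-axis crossing. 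The paper avoids this entirely by never tracking eigenvalue trajectories: it phrases the two conditions as positive-definiteness of the Hermitian matrices $\bP-\mu\bQ$ and $2\bI-\mu\bP+\mu^{2}\bQ$ (via quadratic forms $\bx\bmF(\mu)\bx^{\H}$), so that with $\bP\succ 0$ and $\bQ\succeq 0$ Hermitian the spectrum of $\bmF(\mu)$ is real for every $\mu$ and the only possible boundary crossings are at $\pm 1$. If you add that one observation --- $\bmF(\mu)$ is a real-spectrum (Hermitian) pencil, so $-\bI\prec\bmF(\mu)\prec\bI$ is exactly the eigenvalue condition --- your argument closes and coincides with the paper's; without it, the "only if" direction and the claim that the two real crossings delimit the stability region are not yet justified.
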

\begin{proof}
See Appendix \ref{apx:2}.
\end{proof}

\subsection{Transient Analysis}
\label{sub:4C}
\subsubsection{Learning Curves}
Evolution of the EMSE and MSD over time is described by the learning curves which are defined as
\begin{equation} \label{defs}
\zeta(n)= \E{|e_{a}(n)|^{2}},\ \ \eta(n)=\E{\|\bw(n)\|^2}
\end{equation} 
First, the learning curve for EMSE is obtained. Iterating \eqref{eq:23} starting from $n = 0$  we achieve
\begin{align}
\label{eq:76}
\E{\left\|\widetilde{\bw}(n+1)\right\|_{\bsig}^{2}}&=\left\|\bw^{o}\right\|_{{\bmF^{n+1}\bsig}}^{2}+\mu^{2}\s_{\u}^{2}\sum_{j=0}^{n}\bc_{\M}^{\T}\bmF^{k}\bsig\nonumber\\
&\hspace{-2.2cm}=\E{\left\|\widetilde{\bw}(n)\right\|_{\bsig}^{2}}+\left\|\bw^{o}\right\|_{{\bmF^{n+1}\bsig}}^{2}-\left\|\bw^{o}\right\|_{{\bmF^{n}\bsig}}^{2}+\mu^{2}\s_{\u}^{2}\bc_{\M}^{\T}\bmF^{n}\bsig\nonumber\\
&\hspace{-2.2cm}=\E{\left\|\widetilde{\bw}(n)\right\|_{\bsig}^{2}}+\left\|\bw^{o}\right\|_{{\bmF^{n}\left(\bmF-\bI\right)\bsig}}^{2}+\mu^{2}\s_{\u}^{2}\bc_{\M}^{\T}\bmF^{n}\bsig
\end{align}
where $\bw\left(0\right) = 0$. Using $e_a(n)=\bz^{\H}(n)\widetilde{\bw}(n)$ the EMSE at time $n$ can be expressed alternatively as
\begin{align}
\label{eq:77}
\zeta(n) &= \E{\widetilde{\bw}^{\H}(n)\bz(n)\bz^{\H}(n)\widetilde{\bw}(n)}\nonumber\\
&=\E{\E{\widetilde{\bw}^{\H}(n)\bz(n)\bz^{\H}(n)\widetilde{\bw}(n)}|\widetilde{\bw}(n)}\nonumber\\
&=\E{\widetilde{\bw}^{\H}(n)\bC_{\bz}\widetilde{\bw}(n)}\nonumber\\
&=\E{\left\|\widetilde{\bw}(n)\right\|_{\bC_{\bz}}^{2}}
\end{align}
where $\bC_{\bz}=\E{\bz(n) \bz^{\H}(n)}$. Thus, by setting $\bsig=\vect{\bC_{\bz}}$ in \eqref{eq:76}, the time evolution of EMSE can be obtained. 
\begin{prop}
\label{prop:4}
\textbf{\emph{(Learning Curve)}}
Consider the same setting of Proposition \ref{prop:3}. Let $\zeta(n)$ denote the time evolution of EMSE, as defined by \eqref{eq:77}. Then, the EMSE learning curve of PU-ACLMS is given by the following recursion over $n \geq 0$:
\begin{multline}
\label{eq:78}
\zeta(n+1)=\zeta(n) +\left\|\bw^{o}\right\|_{{\bmF^{n}\left(\bmF-\bI\right)\vect{\bC_{\bz}}}}^{2} \\
 +\mu^{2}\s_{\u}^{2}\bc_{\M}^{\T}\bmF^{n}\vect{\bC_{\bz}}
\end{multline}
Similarly, using \eqref{defs} the MSD learning curve can be evaluated by setting $\bsig = \vect{\bI}$ in \eqref{eq:76} which gives
\begin{multline}
\label{eq:78b}
\eta(n+1)=\eta(n) +\left\|\bw^{o}\right\|_{{\bmF^{n}\left(\bmF-\bI\right)\vect{\bI}}}^{2} \\
 +\mu^{2}\s_{\u}^{2}\bc_{\M}^{\T}\bmF^{n}\vect{\bI}
\end{multline}
\end{prop}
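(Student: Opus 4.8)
The plan is to treat the vectorized weighting vector $\bsig$ as a \emph{free parameter} and to propagate it through the one-step variance recursion \eqref{eq:67}, then specialize $\bsig$ to recover both learning curves. First I would observe that \eqref{eq:67} holds for every admissible Hermitian weight, and that replacing $\bsig$ by $\bmF\bsig$ on the right-hand side is exactly the content of the relation $\vect{\bmGa}=\bmF\bsig$ built in \eqref{eq:66}--\eqref{eq:68}. Iterating \eqref{eq:67} downward from index $n+1$ to $0$, each step peels off a driving term $\mu^{2}\s_{\u}^{2}\bc_{\M}^{\T}\bmF^{k}\bsig$ and advances the weight by one factor of $\bmF$, so that after $n+1$ applications the remaining norm is evaluated at the initial error $\widetilde{\bw}(0)$ with weight $\bmF^{n+1}\bsig$. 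Using the initialization $\bw(0)=0$, hence $\widetilde{\bw}(0)=\bw^{o}$ deterministically, the expectation on that initial term collapses to the plain weighted norm $\left\|\bw^{o}\right\|_{\bmF^{n+1}\bsig}^{2}$ and the accumulated driving terms form the sum $\mu^{2}\s_{\u}^{2}\sum_{k=0}^{n}\bc_{\M}^{\T}\bmF^{k}\bsig$, which is the closed form in the first line of \eqref{eq:76}.

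Next I would convert this closed form into the stated recursion by subtracting the corresponding expression at index $n$ from the one at $n+1$. The partial sums telescope, leaving the single driving term $\mu^{2}\s_{\u}^{2}\bc_{\M}^{\T}\bmF^{n}\bsig$, while the two residual norms combine using linearity of the weighted seminorm in its weight, $\left\|\bx\right\|_{\bmSig_{1}}^{2}-\left\|\bx\right\|_{\bmSig_{2}}^{2}=\left\|\bx\right\|_{\bmSig_{1}-\bmSig_{2}}^{2}$, together with the identity $\bmF^{n+1}-\bmF^{n}=\bmF^{n}\left(\bmF-\bI\right)$ (valid since powers of $\bmF$ commute). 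This yields the recursion in the last line of \eqref{eq:76}, which holds for any $\bsig$.

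Finally I would specialize the weighting. Setting $\bsig=\vect{\bC_{\bz}}$ turns the left-hand side into $\E{\left\|\widetilde{\bw}(n)\right\|_{\bC_{\bz}}^{2}}$, which by the EMSE identity \eqref{eq:77} equals $\zeta(n)$, giving \eqref{eq:78}. Setting instead $\bsig=\vect{\bI}$ (i.e.\ $\bmSig=\bI$) collapses the weighted norm to the ordinary Euclidean norm $\E{\left\|\widetilde{\bw}(n)\right\|^{2}}$, i.e.\ the mean-square deviation $\eta(n)$, giving \eqref{eq:78b}.

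I expect the only genuinely delicate point to be the telescoping/difference step: one must keep the weight a free argument so that the recursion \eqref{eq:67} may be instantiated consistently at both $\bsig$ and $\bmF\bsig$, and the linearity of the seminorm in its weight is precisely what licenses collapsing $\bmF^{n+1}\bsig-\bmF^{n}\bsig$ into $\bmF^{n}(\bmF-\bI)\bsig$. Everything else is the mechanical energy-conservation bookkeeping already assembled in \eqref{eq:57}--\eqref{eq:70}, so no independence or stationarity hypotheses beyond those of Proposition \ref{prop:3} are required.
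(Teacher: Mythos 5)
Your proposal is correct and follows essentially the same route as the paper: the authors likewise iterate the weighted variance relation \eqref{eq:67} from $\bw(0)=0$ to obtain the closed form in \eqref{eq:76}, difference consecutive terms using $\bmF^{n+1}-\bmF^{n}=\bmF^{n}(\bmF-\bI)$ and linearity of the weighted norm in its weight, and then specialize $\bsig=\vect{\bC_{\bz}}$ (via \eqref{eq:77}) and $\bsig=\vect{\bI}$ to get \eqref{eq:78} and \eqref{eq:78b}. No substantive difference from the paper's argument.
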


\subsubsection{Alternative Steady-state Analysis}
Recursion \eqref{eq:67} can be used  to derive the steady-state values for EMSE and MSD by 
setting $n \to \infty$. In this case recursion \eqref{eq:67} becomes
\begin{align}
\label{eq:79}
\E{\left\|\widetilde{\bw}\left(\infty\right)\right\|_{\bsig}^{2}}=\E{\left\|\widetilde{\bw}\left(\infty\right)\right\|_{\bmF\bsig}^{2}}+\mu^{2}\s_{\u}^{2}\bc_{\M}^{\T}\bsig
\end{align}
which is equivalent to
\begin{align}
\label{eq:80}
\E{\left\|\widetilde{\bw}\left(\infty\right)\right\|_{\left(\bI-\bmF\right)\bsig}^{2}}=\mu^{2}\s_{\u}^{2}\bc_{\M}^{\T}\bsig
\end{align}
To specify the EMSE, it is required to evaluate $\E{\left\|\widetilde{\bw}\left(\infty\right)\right\|_{\bsig}^{2}}$ by setting $\bsig = \left(\bI-\bmF\right)^{-1}\vect{\bC_{\bz}}$. In this way, the left-hand side of \eqref{eq:80} is the steady-state EMSE and we have
\begin{equation}
\label{eq:81}
\zeta(\infty)=\mu^{2}\s_{\u}^{2}\bc_{\M}^{\T}\left(\bI-\bmF\right)^{-1}\vect{\bC_{\bz}}
\end{equation}
For sufficiently small step-size $\mu$, expression \eqref{eq:68} can be approximated by
\begin{equation}
\label{eq:82}
\bmF\approx \bI-2\mu(\bC_{\bz_{\M}}\otimes\bI)
\end{equation}
Substituting \eqref{eq:82} into \eqref{eq:81} leads to
\begin{align} \label{emse_ap}
\zeta(\infty)&=\mu^{2}\s_{\u}^{2}\bc_{\M}^{\T}\left(\bI-\bmF\right)^{-1}\vect{\bC_{\bz}}\nonumber\\
&\approx \frac{\mu\s_{\u}^{2}}{2}\bc_{\M}^{\T}\left(\bC_{\bz_{\M}}\otimes\bI\right)^{-1}\vect{\bC_{\bz}}\nonumber\\
&\approx \frac{\mu\s_{\u}^{2}}{2}\bc_{\M}^{\T}\left(\bC_{\bz_{\M}}^{-1}\otimes\bI\right)\vect{\bC_{\bz}}\nonumber\\
&\approx \frac{\mu\s_{\u}^{2}}{2}\bc_{\M}^{\T}\vect{\bC_{\bz}\bC_{\bz_{\M}}^{-1}}\nonumber\\
&\approx \frac{\mu\s_{\u}^{2}}{2}\tr{\E{\bmJ_{\M}(n)\bz(n)\bz^{\H}(n)\bmJ_{\M}(n)}\bC_{\bz}\bC_{\bz_{\M}}^{-1}} 
\end{align}
Under Assumptions \ref{asp:1} and \ref{asp:2}, we have 
\begin{align}
\label{eq:83}
&\E{\bmJ_{\M}(n)\bz(n)\bz^{\H}(n)\bmJ_{\M}(n)}=\bC_{\bz_{\M}}
\\
\label{eq:84}
&\tr{\E{\bmJ_{\M}(n)\bz(n)\bz^{\H}(n)\bmJ_{\M}(n)}\bC_{\bz}\bC_{\bz_{\M}}^{-1}}=\tr{\bC_{\bz}}
\end{align}
replacing \eqref{eq:83} and \eqref{eq:84} in \eqref{emse_ap} yields
\begin{equation}
\label{eq:85}
\zeta(\infty) \approx \frac{\mu\s_{\u}^{2}}{2}\tr{\bC_{\bz}}
\end{equation}
Since $\tr{\bC_{\bz}}= 2\tr{\bC_{\bu}}$, the steady-state EMSE is given by
\begin{equation}
\label{eq:86}
\zeta(\infty) \approx \mu\s_{\u}^{2}\tr{\bC_{\bu}}
\end{equation}
This is the same result as that obtained for the full-update ACLMS algorithm (see \eqref{eq:41b}), which according to Proposition \ref{prop:1} is also valid for partial update schemes. The steady-state MSD can be calculated in a similar way as
\begin{equation}
\label{eq:86b}
\eta(\infty) \approx \mu \s_{\u}^{2}N
\end{equation}

\subsection{Convergence Rate Analysis}
\label{sub:4E}
It is useful to compare the convergence rate of PU-ACLMS algorithms to the full-update ACLMS algorithm. For this, we examine the decay rates of sequential and stochastic schemes under the following assumptions. 
\begin{assumption}\label{asp:3}\
\begin{enumerate}[(i)]
\item The filter length $N$ is a multiple of $\beta$, i.e. $\frac{N}{M}$ is an integer.
\item The augmented covariance matrix of $\bz(n)$, $\bC_{\bz}$, is block-diagonal such that $\sum_{t=1}^{\beta}\bmT_{t}\bC_{\bz}\bmT_{t}=\bC_{\bz}$. The matrix $\bmT_{t}$ is defined by zeroing out some rows in the identity matrix $\bI$ such that $\sum_{t=1}^{\beta}\bmT_{t}=\bI$.
\end{enumerate}
\end{assumption}
Note that, \emph{Assumption} 3-ii is used to make the analysis tractabe. 

First, it is required to restate the evolution equations of the existing algorithms. Thus, for the regular full-update ACLMS algorithm  the recursion \eqref{eq:54} is rewritten as follows:
\begin{align}
\label{eq:87}
\E{\widetilde{\bw}(n+1)}=\left(\bI-\mu{\bC_{\bz}^{*}}\right)\E{\widetilde{\bw}(n)}
\end{align}
Combining $\beta$-iterations of the recursion \eqref{eq:87}, yields the mean of the coefficient error vector update given by
\begin{align}
\label{eq:88}
\E{\widetilde{\bw}\left(n+\beta\right)}&=\prod_{t=1}^{\beta}\left(\bI-\mu{\bC_{\bz}^{*}}\right)\E{\widetilde{\bw}\left(n+t-1\right)}\nonumber\\
&=\left(\bI-\mu{\bC_{\bz}^{\H}}\right)^{\beta}\E{\widetilde{\bw}(n)}
\end{align}

For the sequential PU-ACLMS, the mean-error-update equation is given by
\begin{align}
\label{eq:89}
\E{\widetilde{\bw}\left(n+1\right)}&=\left(\bI-\mu\bmT_{\text{mod}\left(n,\beta\right)+1}{\bC_{\bz}^{*}}\right)\E{\widetilde{\bw}(n)}
\end{align} 
For any number of optional fractions, the update equation \eqref{eq:89} can be written as 
\begin{align}
\label{eq:90}
\E{\widetilde{\bw}\left(n+\beta\right)}&=\prod_{t=1}^{\beta}\left(\bI-\mu\bmT_{\text{mod}\left(n+t,\beta\right)+1}{\bC_{\bz}^{*}}\right)\E{\widetilde{\bw}(n)}
\end{align} 
Combining $\beta$ updates of \eqref{eq:90}, the following relation for evolution of mean-error-update is obtained:
\begin{align}
\label{eq:91}
\E{\widetilde{\bw}\left(n+\beta\right)}&=\left(\bI-\mu{\bC_{\bz}^{*}}\right)\E{\widetilde{\bw}(n)}
\end{align} 
In the stochastic scheme, the following evolution equation conditioned on a choice $\bmS_{t}$, $t=1,\ldots,\beta$, results in:
\begin{align}
\label{eq:92}
\E{\widetilde{\bw}\left(n+1\right)|\bmS_{t}}&=\left(\bI-\mu\bT_{n}{\bC_{\bz}^{*}}\right)\E{\widetilde{\bw}(n)|\bmS_{t}}\nonumber\\
&=\left(\bI-\mu\bmT_{t}{\bC_{\bz}^{*}}\right)\E{\widetilde{\bw}(n)|\bmS_{t}}
\end{align} 
where the matrix $\bT_{n}$ is chosen randomly from $\bmT_{t}$, $t=1,\ldots,\beta$, with equal probability. Averaging \eqref{eq:92} over all choices of $\bmS_{k}$, we get
\begin{align}
\label{eq:93}
\E{\widetilde{\bw}\left(n+1\right)}&=\left(\bI-\frac{\mu}{\beta}{\bC_{\bz}^{*}}\right)\E{\widetilde{\bw}(n)}
\end{align}
Collecting $\beta$ updates of this equation, the mean of coefficient error vector update is 
\begin{align}
\label{eq:94}
\E{\widetilde{\bw}\left(n+\beta\right)}&=\prod_{t=1}^{\beta}\left(\bI-\frac{\mu}{\beta}{\bC_{\bz}^{*}}\right)\E{\widetilde{\bw}\left(n+t-1\right)}\nonumber\\
&=\left(\bI-\frac{\mu}{\beta}{\bC_{\bz}^{*}}\right)^{\beta}\E{\widetilde{\bw}(n)}
\end{align}
From \eqref{eq:88}, \eqref{eq:91}, and \eqref{eq:94}, the rate of decay $r$ for the existing algorithms are given by
\begin{align}
\label{eq:95}
r_{\mathrm{full}}&= \left(\bI-\mu{\bC_{\bz}^{*}}\right)^{\beta}\\
\label{eq:96}
r_{\mathrm{seq}}&=  \left(\bI-\frac{\mu}{\beta}{\bC_{\bz}^{*}}\right)\\
\label{eq:97}
r_{\mathrm{stoch}}&=  \left(\bI-\frac{\mu}{\beta}{\bC_{\bz}^{*}}\right)^{\beta}
\end{align}

\begin{remark}
\label{rmk:6}
The decay rate of full-update ACLMS algorithm is $\beta$-time faster than those of sequential and stochastic PU-ACLMS algorithms.
\end{remark}

\section{Performance Evaluations}
\label{sec:5}
In the first set of simulations, the performance of proposed PU-ACLMS algorithm for the complex-valued identification problem of  a complex-valued correlated or pseudo-correlated second order non-circular Gaussian signal, is given by
\begin{align}
\label{eq:98}
\bu\left(i+1\right)=0.3\bu(n)+0.1\bu^{*}(n)+q(n)
\end{align} 
where $\bq(n)$ is a zero-mean doubly-white circular Gaussian noise process with unit variance, $\sigma_{q}^{2}=1$, and complementary variance, $\tilde{\sigma}_{q}^{2}=0.9$, giving a high degree of non-circularity. The weights were initialized randomly. All graphs are obtained by  averaging over $100$ independent trials. Fig. \ref{fig:2} show the scatter plot of desired signal $d(n)$ and measurement noise $\u(n)$. 

Figs. \ref{fig:3} and \ref{fig:4} illustrate the resulting theoretical and simulated learning curves, EMSE, and MSD curves of PU-ACLMS algorithm using sequential and stochastic schemes for different choices of coefficients update $M$. From these figures it can be observed that the theoretical analysis precisely specify the EMSE and MSD evolutions, in both transient and steady-state stages.  
Evidently, the full-update cases have the fastest convergence rate in comparison to PU-ACLMS algorithms. Generally speaking, the speed of convergence reduces proportionally for both sequential and stochastic schemes as the number of updated coefficients per iteration is decreased. 
Fig. \ref{fig:5} provides a comparison between MSD curves of sequential and stochastic schemes. As can be seen, the transient and steady-state behavior of of these two schemes are quite similar throughout all stage of adaptation.  

In the next scenario, the performance of PU-ACLMS algorithm for different values of step-sizes, $\mu$ is evaluated. Fig. \ref{fig:6} shows the steady-state EMSE and MSD values of PU-ACLMS algorithm  as a function of step-size using the theoretical expressions and simulated ones. It can be seen from \ref{fig:6} that the theoretical steady-state EMSE and MSD can well follow  the simulated ones. Again in all figures, a good agreement between analytical and empirical results is achieved  with maximum performance difference of only 0.002 dB.

\begin{figure} [t]
\centering \includegraphics [width=8.5cm]{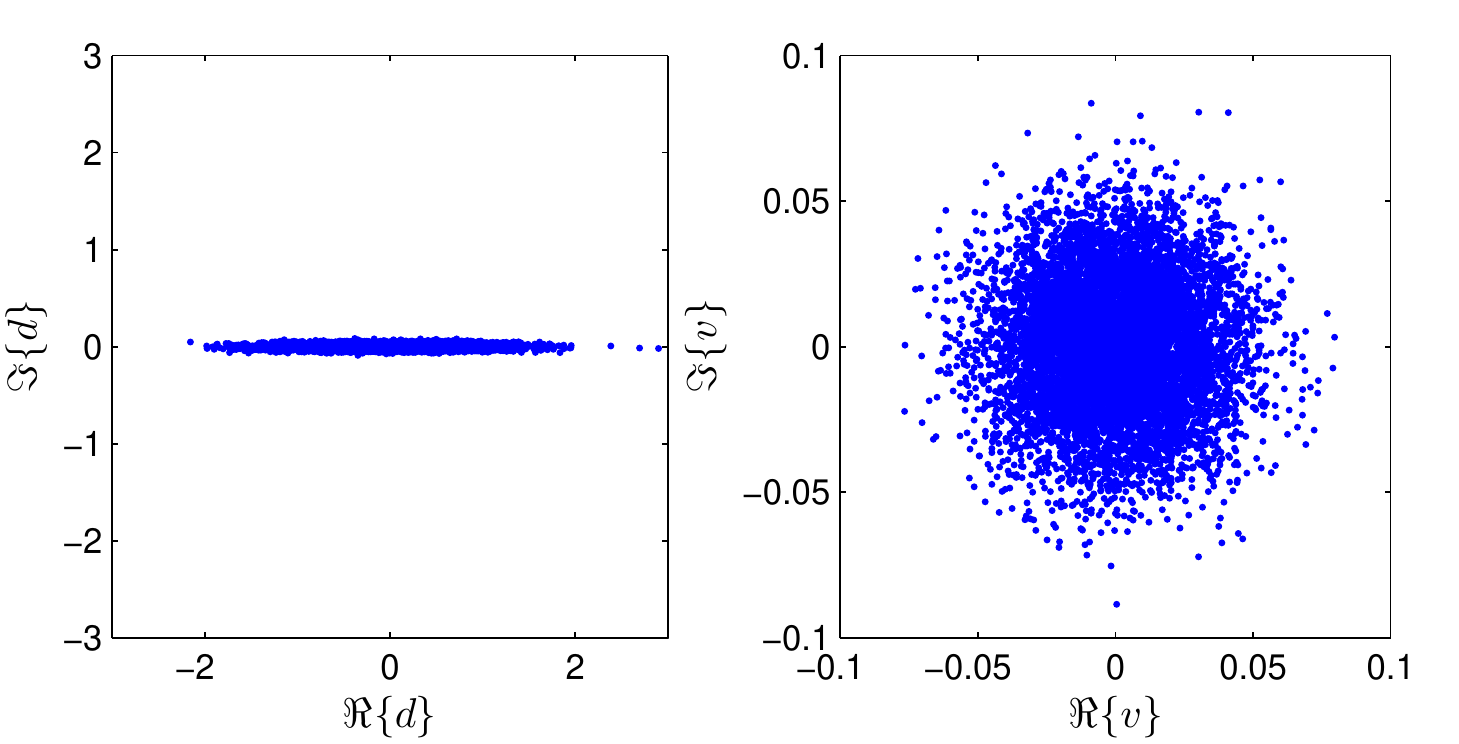} 
\centering \caption{The scatter plot of desired signal $d(n)$ and measurement noise $\u(n)$.}
\label{fig:2}
\end{figure}

\begin{figure} [t]
\centering \includegraphics [width=9cm]{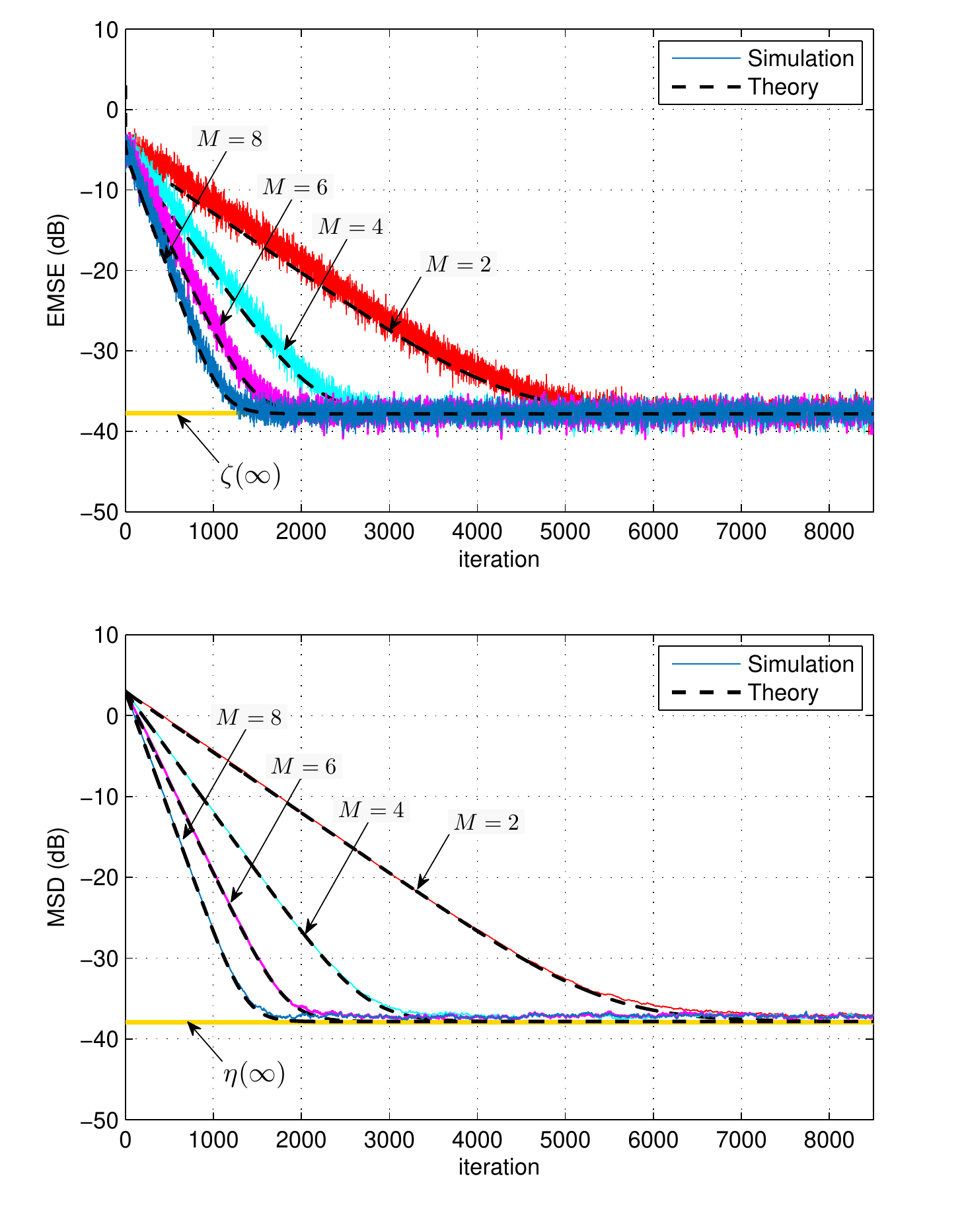} 
\centering \caption{Theoretical and experimental EMSE (top) and MSD (bottom) of PU-ACLMS algorithm using sequential scheme for different values of coefficients update $M$, when the step-size is $\mu = 0.02$.}
\label{fig:3}
\end{figure}

\begin{figure} [t]
\centering 
\includegraphics [width=9cm]{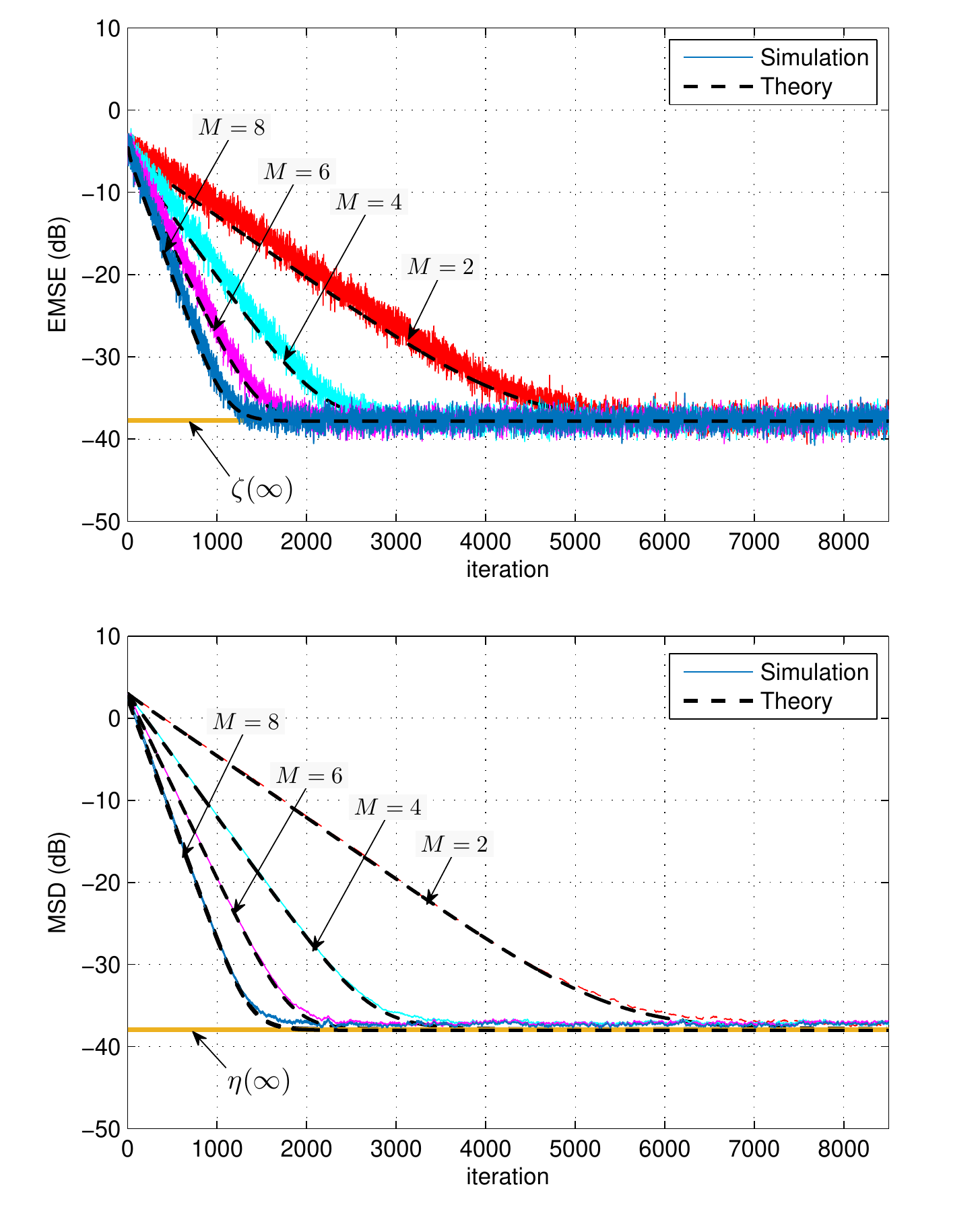} 
\centering \caption{Theoretical and experimental EMSE (top) and MSD (bottom) of PU-ACLMS algorithm using stochastic scheme for different values of coefficients update $M$, when the step-size is $\mu = 0.02$.}
\label{fig:4}
\end{figure}

\begin{figure} [t]
\centering 
\includegraphics [width=9cm]{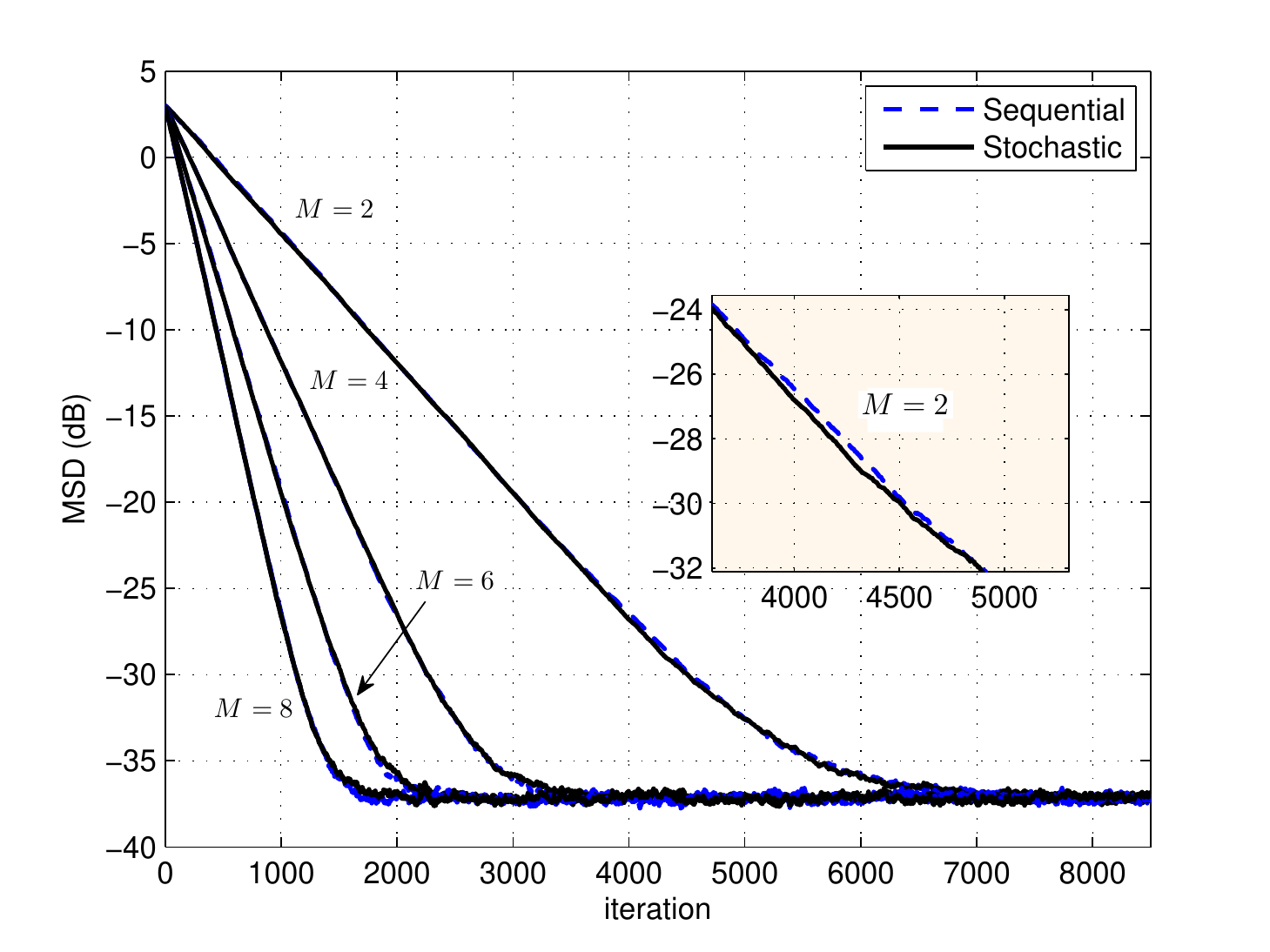} 
\centering \caption{MSD learning curves of PU-ACLMS algorithm for both sequential and stochastic schemes for different values of coefficients update $M$, when the step-size is $\mu = 0.02$.}
\label{fig:5}
\end{figure}
\begin{figure} [t]
\centering \includegraphics [width=9.5cm]{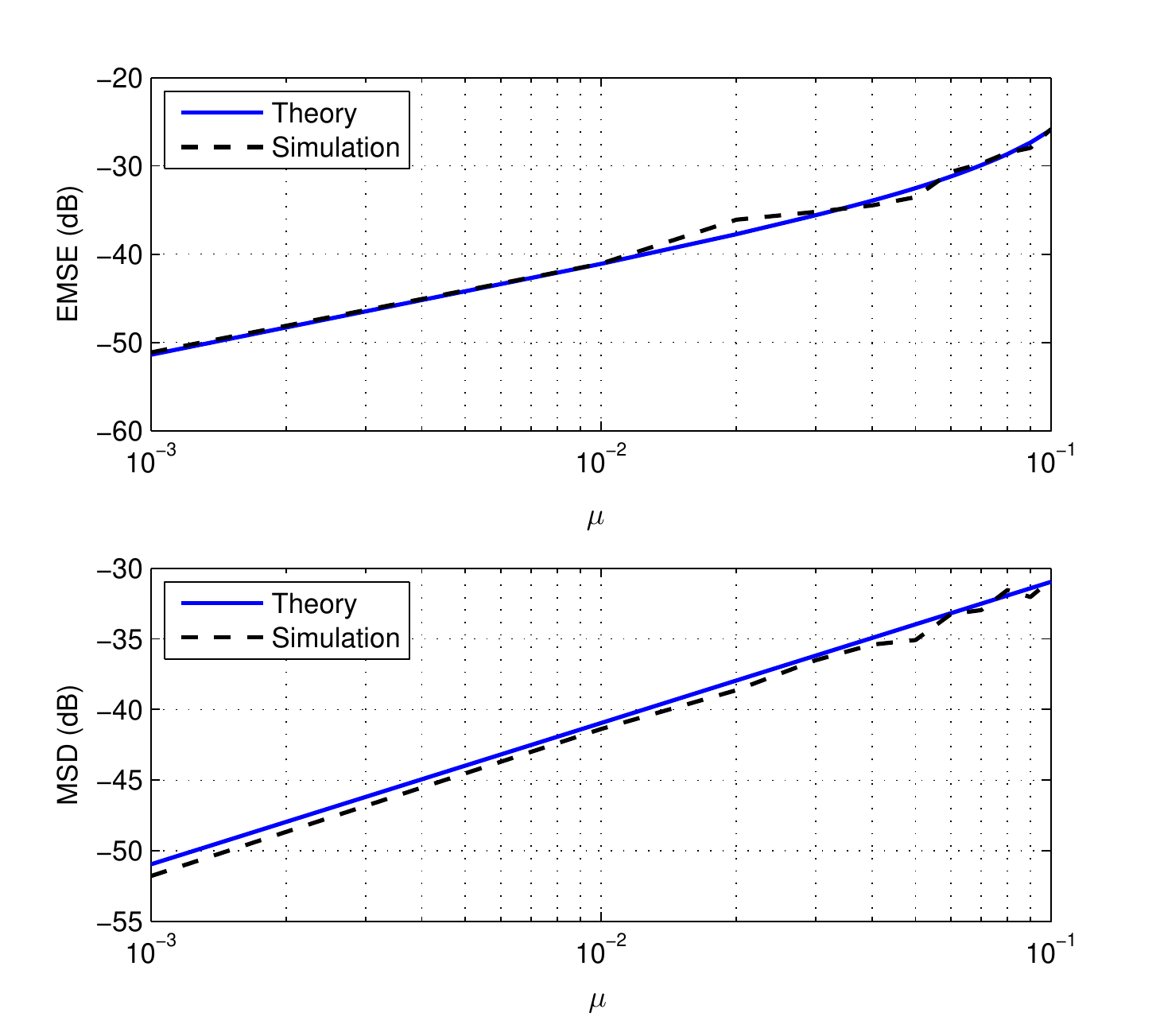} 
\centering \caption{Theoretical and experimental steady-state EMSE (top) and MSD (bottom) of sequential PU-ACLMS algorithm in terms of the step-size parameter $\mu$.}
\label{fig:6}
\end{figure}




\section{Conclusion}
\label{sec:6}
In this paper, a new algorithm, termed briefly PU-ACLMS algorithm has been proposed,  where only a portion of  coefficient weights are selected to update at each iteration. Moreover, the computational complexity for full-update ACLMS and its partial-update implementations have been discussed. It has been concluded that for large filter lengths  
the PU-ACLMS algorithm is able to decrease the  maximum complexity approximately by a factor of two. The steady-state 
performances of the proposed PU-ACLMS algorithm have been analyzed and the mean and mean-square convergence behavior of the PU-ACLMS algorithm   studied for a second-order non-circular Gaussian input regressor. In addition, three cases of 
coefficient weight updates, i.e., full-update, sequential and stochastic partial update have been analyzed.  The EMSE and MSD learning curves have been derived. The provided analysis demonstrated that at a given step-size $\mu$ the level of steady-state EMSE of full-update 
case and PU-ACLMS algorithm are identical. However, their convergence speeds are decreased in  proportion to the number of coefficients weights updated per iteration 
divided by the filter length. The simulation results support the theoretical derivations.

\appendices

\section{Proof of Proposition \ref{prop:2}}
\label{apx:1}
\begin{proof}
Consider the update equation \eqref{eq:54}. Since $\bC_{\bz_{\M}}^{*}$ is Hermitian and positive semi-definite, we can write the eigen-decomposition of it as
\begin{equation}
\bC_{\bz_{\M}}^{*}=\bmU \Lambda\bmU^{\H}\nonumber
\tag{A.1}
\end{equation}
where $\Lambda=\Diag{\lambda_{1},\lambda_{2},\ldots,\lambda_{\M}}$, $\lambda_{k}\geq\lambda_{k+1}$, is the diagonal matrix of the eigenvalues of $\bC_{\bz_{\M}}^{*}$ and $\bmU$ is the matrix of the corresponding eigenvectors \footnote{The eigenvectors, $\bmU$, may be chosen to be orthonormal in which cases $\bmU$ is unitary matrix; $\bmU\bmU^{\H}=\bmU^{\H}\bmU=\bI$}. 

To express the convergence modes \cite{mandic2009complex} solely in terms of the corresponding eigenvalues of the partial correlation matrix, we rotate the partial weight error vector $\widetilde{\bw}(n)$ by the eigen-matrix $\bmU$, that is, $\widetilde{\bw}^{'}(n)=\bmU\widetilde{\bw}(n)$. This linear transformation changes the evolution of the partial weighted error vectors as follows:
\begin{align}
\E{\widetilde{\bw}'\left(n+1\right)}=\left(\bI-\mu\Lambda\right)\E{\widetilde{\bw}'(n)}\nonumber
\tag{A.2}
\end{align}
As the maximum eigenvalue, $\lambda_{\mathrm{max}}$, yields the fastest mode of convergence, the condition for convergence in mean of the PU-ACLMS algorithm becomes
\begin{align}
 \left|\lambda_{\mathrm{max}}\left(\bI-\mu\Lambda\right)\right|\nonumber
\tag{A.3}
\end{align}
Using the fact that $\bC_{\bz_{\M}}^{*}$ and $\bC_{\bz_{\M}}$ have the same eigenvalues, the asymptotic unbiasedness and mean stability of the PU-ACLMS are guaranteed if 
\begin{align}
\left|1-\mu\lambda_{\mathrm{max}}\left(\bC_{\bz_{\M}}\right)\right|<1\nonumber
\tag{A.4}
\end{align}
This inequality determines the stability bounds for step-size $\mu$ as
\begin{align}
0<\mu<\frac{2}{\lambda_{\mathrm{max}}\left(\bC_{\bz_{\M}}\right)}\nonumber
\tag{A.5}
\end{align}
\end{proof}

\section{Proof of Proposition \ref{prop:3}}
\label{apx:2}
\begin{proof}
Consider the matrix $\bmF$ of the form \eqref{eq:71} with positive-definite matrix $\bP>0$, non-negative definite matrix $\bQ \geq 0$, and positive step-size $\mu$. For mean-square stability, we need to determine a bound on $\mu$ such that all eigenvalues of $\bmF$ are strictly inside the unite disc, i.e. $\left|\lambda\left(\bmF\right)\right|<1$. This condition is satisfied if the step-size parameter $\mu$ satisfy
\begin{align}
\max_{\left\|\bx\right\|=1}\left\{\bx\bmF\left(\mu\right)\bx^{H}\right\}<1\nonumber
\tag{B.1}
\end{align}
or equivalently, $\bP-\mu\bQ > 0$, and
\begin{align}
\min\left\{\bx\bH\left(\mu\right)\bx^{H}\right\} > -1\nonumber
\tag{B.2}
\end{align}
or, equivalently, $\bH\left(\mu\right) = 2\bI-\mu\bP+\mu^{2}\bQ > 0$. These inequalities satisfy the stability conditions $\lambda\left(\bmF\right) < 1$ and $\lambda\left(\bmF\right) > -1$, respectively.

In the light of Appendix 25.A of \cite{Sayed2008}, the inequality (B.1) holds if, and only if,
\begin{align}
\mu<\frac{1}{\lambda_{\mathrm{max}}\left(\bP^{-1}\bQ\right)}\nonumber
\tag{B.3}
\end{align}

Since the eigenvalues of $\bH$ vary continuously with step-size parameter $\mu$, an upper-bound on $\mu$, $\mu_{max}$, such that $\bH\left(\mu\right) > 0$ is obtained from the roots of $\text{det}\left[\bH\left(\mu\right)\right] = 0$. Employing the block-matrix determinant principle, the determinant of $\bH\left(\mu\right)$ is equal to the determinant of the block-matrix 
 \begin{align}
\bK\left(\mu\right)
=\begin{bmatrix}
2\bI-\mu\bP &  \mu\bQ \\
-\mu\bI &  \bI \\
\end{bmatrix},\nonumber
\tag{B.4}
\end{align} 
Moreover, since
\begin{align}
\bK\left(\mu\right)
=\begin{bmatrix}
2\bI &  \textbf{0} \\
\textbf{0} &  \bI \\
\end{bmatrix}
\left(\begin{bmatrix}
\bI &  \textbf{0} \\
\textbf{0} &  \bI \\
\end{bmatrix}-\mu
\begin{bmatrix}
\bP/2 &  -\bQ/2 \\
\bI &  \textbf{0} \\
\end{bmatrix}\right)\nonumber
\tag{B.5}
\end{align}
the condition $\text{det}\left[\bK\left(\mu\right)\right]=0$ and $\text{det}\left(\bI-\mu\bG\right)=0$ are identical, where
\begin{align}
\bG\triangleq 
\begin{bmatrix}
\bP/2 &  -\bQ/2 \\
\bI &  \textbf{0} \\
\end{bmatrix}\nonumber
\tag{B.6}
\end{align}

In order to guarantee $\lambda\left(\bmF\right) > -1$, the step-size $\mu$ must satisfy the following inequality
\begin{align}
\mu<\frac{1}{\max\left\{\lambda\left\{\bG\right\}\in\R^{>0}\right\}}\nonumber
\tag{B.7}
\end{align}
It follows that the aforementioned results can be merged together to yield the following condition
\begin{align}
0<\mu<\min\left\{\frac{1}{\lambda_{\mathrm{max}}\left(\bP^{-1}\bQ\right)},\frac{1}{\max\left\{\lambda\left\{\bG\right\}\in\R^{>0}\right\}}\right\}\nonumber
\tag{B.8}
\end{align}
The above range of step-size $\mu$ guarantees the stability of $\bmF$.
\end{proof}


\end{document}